\newtheorem{theorem}{Theorem}[section]
\newtheorem{proposition}[theorem]{Proposition}
\newtheorem{lemma}[theorem]{Lemma}
\newtheorem{remark}[theorem]{Remark}
\newcommand{\longthmtitle}[1]{\mbox{}\textit{(#1).}}
\newcommand{\ldef}{:=}
\newcommand{\rdef}{=:}
\newcommand{\Mc}[1]{\mathcal{#1}}
\newcommand{\MbrR}{\Mc{R}(\Rin)}
\newcommand{\redR}{\mathbf{R}^-}
\newcommand{\integers}{{\mathbb{Z}}}
\newcommand{\natz}{{\mathbb{N}}_0}
\newcommand{\nat}{{\mathbb{N}}}
\newcommand{\rev}{\phi_{\hat{G}G}(r)}
\newcommand{\intersect}{\ensuremath{\operatorname{\cap}}}
\newcommand{\intrangecc}[2]{[#1, #2]_\integers}
\newcommand{\intrangeoc}[2]{(#1, #2]_\integers}
\newcommand{\thh}{\text{th}}
\newcommand{\lstar}{{\lambda_r^i}^*}
\newcommand{\map}[3]{#1:#2 \rightarrow #3}
\newcommand\numberthis{\addtocounter{equation}{1}\tag{\theequation}}
\DeclareMathOperator*{\argmin}{argmin}
\DeclareMathOperator*{\argmax}{argmax}
\newcommand{\fril}{f_r^i(l)}
\newcommand{\tfril}{\tilde{f}_r^i(l)}
\newcommand{\tfr}{\tilde{f}_r}
\newcommand{\tFl}{\tilde{F}_l}
\newcommand{\tdl}{\tilde{d}_l}
\newcommand{\tSl}{\tilde{S}_l}
\newcommand{\pril}{p_r^i(l)}
\newcommand{\bril}{\beta_r^i(l)}
\newcommand{\Rhat}{\mathbf{\hat{R}}}
\newcommand{\Rin}{\mathbf{R}_{I}}
\newcommand{\Rout}{\mathbf{R}_{O}}
\newcommand{\Rtl}{\mathbf{\hat{R}}^{-}}
\newcommand{\hfril}{\hat{f}_r^i(l)}
\newcommand{\hfr}{\hat{f}_r}
\newcommand{\hdl}{\hat{d}_l}
\newcommand{\hbril}{\hat{\beta}_r^i(l)}
\newcommand{\htril}{\hat{t}_r^i(l)}
\newcommand{\hFl}{\hat{F}_l}
\newcommand{\stup}{(r,i,l)}
\newcommand{\tril}{t_r^i(l)}
\newcommand{\Mbf}[1]{\mathbf{#1}}   
\newcommand{\srs}[1]{\sigma(#1)}
\newcommand{\oprocendsymbol}{\hbox{$\bullet$}}
\newcommand{\oprocend}{\relax\ifmmode\else\unskip\hfill\fi\oprocendsymbol}
\date{}
\begin{document}

\title{One-Shot Coordination of First and Last Mode Transportation}
\author{Subhajit Goswami \qquad Pavankumar Tallapragada %
  \thanks{This work was partially supported by Robert Bosch Centre for
    Cyber-Physical Systems, Indian Institute of Science. A part of
    this work was presented in the 2019 European Control Conference
    as~\cite{SG-PT:2019-ecc}. }%
  \thanks{S. Goswami and P. Tallapragada are with the Department of
    Electrical Engineering, Indian Institute of
    Science. P. Tallapragada is also with the Robert Bosch Centre for
    Cyber-Physical Systems, Indian Institute of Science. \tt\small
    \{subhajitg, pavant\}@iisc.ac.in }%
}

\maketitle
\begin{abstract}
  In this paper, we consider coordinated control of feeder vehicles
  for first and last mode transportation. The model is macroscopic
  with volumes of demands and supplies along with flows of
  vehicles. We propose a one-shot problem for transportation of demand
  to or from a hub within a fixed time window, assuming the knowledge
  of the demand and supply configurations. We present a unified
  optimization framework that is applicable for both operator profit
  maximization and social welfare maximization. The latter goal is
  useful for applications such as disaster response. The decision
  variables in the optimization problem are routing and allocations of
  the vehicles for different services.

  With K.K.T. analysis we propose an offline method for reducing the
  problem size. Further, we also analyze the problem of maximizing
  profits by optimally locating the supply for a given total supply
  and present a closed form expression of the maximum profits that can
  be earned over all supply configurations for a given demand
  configuration. We also show an equivalence between optimal supply
  location in the first mode problem and the last mode problem. We
  present a model for pricing based on the cost and travel time of the
  best alternate transportation and present necessary conditions for
  the feeder service to be viable. We illustrate the results through
  simulations and also compare the proposed model with a traditional
  vehicle routing problem. Through simulations, we also compare with
  the microscopic version of the problem with the decision variables
  being integers. We demonstrate that the route reduction algorithm
  proposed for the macroscopic formulation is still useful for
  computing nearly optimal solutions to the microscopic problem with
  much improved computational efficiency.
\end{abstract}

\begin{IEEEkeywords}
  Networked transportation systems, first and last-mode
  transportation, transportation for disaster response, optimization
  based coordination, pricing
\end{IEEEkeywords}

\section{Introduction} \label{sec:Intro}

This paper explores the idea of a coordinated and demand responsive
feeder service with a single origin (destination) and multi
destination (origin) setup, over a hard single time-window.  This
problem occurs in scenarios such as evacuation and distribution of
relief materials in response to disasters~\cite{VC-RB-AB:2012} and
transportation for one-time events with a large
foot-fall. Furthermore, this scenario has potential applications in
multi-modal transportation systems where the main mode follows a
schedule and needs to be coordinated with first and last modes to
achieve a coordinated system. This potential application is seen in
many cases like peak-hour single destination
para-transit~\cite{RL-RM:2000}, freight
transportation~\cite{MSS-etal:2014}, express courier
systems~\cite{CB-etal:2002} etc.

\subsubsection*{Literature Review} 


In the context of routing of transportation services, the vehicle
routing problem (V.R.P.) \cite{HM:1989, PT-DV:2002-book,
  FF:2013-book,MWU:2017-book, TV-GL-PM-2019-VRP-Review} assumes a
depot from where one or more vehicles are routed via locations where
one or more types of entities are picked up at their origin(s) and
dropped off at their destination(s). There are also many variations to
the basic V.R.P. like capacitated \cite{FF:2013-book}, multiple origin
and destination \cite{PT-DV:2002-book}, multiple time window
\cite{MWU:2017-book, PT-DV:2002-book} or simultaneous pick-up and
delivery \cite{HM:1989} etc. Sharing some similarities with V.R.P. are
the ride sharing problem~\cite{NA-AE-MS-XW:2012, JA-etal:2017,
  FYV-etal:2018, CCT-CYC:2007} and dial-a-ride
problem~\cite{JFC-GL:2007, SCH-etal:2018, BL-DK-TVW-HAR:2016}. In
these problems, the aim is to match vehicles with the passengers while
maximizing operator's profits in some time windows. Recently, there
has also been a growing interest in routing problems with awareness of
the demand and the fleet. For example, ~\cite{KTS-NHD-DHL:2010,
  FM-etal:2016} explore the problem of optimally dispatching taxis
based on the location of the taxis and customer requests.

While many routing problems deal with discrete vehicles and discrete
entities to be transported, macroscopic models that deal with flows of
vehicles and volumes of demand and supply are also common. Though less
realistic than discrete models, they allow for computationally easier
solutions and greater scope for analysis and higher order planning. In
the context of demand anticipative mobility,
\cite{FR:2017,MS-etal:2018, MP-SLS-EF-DR:2012,RZ-MP:2016,
  AV-WG-SS:2017} aim to match demand and supply by routing autonomous
vehicle flows and maximize throughput in the network through a
steady-state design of the load-balancing and routing flow rates. Much
of the literature on fleet routing is in the context of uni-modal or
single hop transportation services. Though multi-modal transportation
has been extensively studied, dynamic, demand and supply aware first
and last mode service is not sufficiently studied~\cite{SS-NC:2016}.

There have been multiple studies on humanitarian aid
planning. Evacuation and distribution of relief material in disaster
management scenarios form the prime focus of such
studies~\cite{AMC-etal-2012-Review, GGP-RB-2013-Review, MSH-etal-2016,
  MS-RB-QH-2020-Review }. Most of these studies focus on determining
the location of the ware-houses and safe-houses given the
forecasts.~\cite{MA-etal-2015} proposes a multi-depot location-routing
problem (MDLRP). The MDLRP considers network failure with multiple use
of vehicles, and a standard relief time. \cite{MH-etal-2012} analyzes
a last-mile delivery problem to gain insights into the underlying
route structures and provide observations on routing aspects that are
necessary for planning out disaster relief operations. The results are
in the context of three scenarios: time minimization, demand
maximization and service comparability. \cite{JMF-etal-2018} proposes
a multi-criterion last-mile routing problem, while
\cite{AMC-etal-2008} considers a capacitated routing problem which
minimizes the \emph{maximum response time} for relief to reach the
affected regions.  In~\cite{RL-ZJ-2019}, a time-windowed V.R.P. with a
single depot is proposed for the purpose of humanitarian aid
distribution.

\subsubsection*{Contributions}

In this paper, we consider the first-mode or \emph{feed-in} problem
and the last-mode or \emph{feed-out} problem, wherein all the demand
has a common destination and origin, respectively. The following are
the main contributions of this paper.
\begin{itemize}
\item \emph{Model:} We present a macroscopic network flow model for
  the one-shot feed-in and feed-out problems. We call it ``one-shot''
  since there is a single hard time window within which the
  transportation has to occur. Given the demand and supply
  configuration on a network, the optimization problem determines
  optimal routing of the vehicles and servicing of the demand. We also
  consider the combined problem of optimal supply location and
  routing. The proposed model includes both maximization of monetary
  profit and maximization of serviced demand or social welfare as
  special cases, depending on the choice of the parameters.
  
\item \emph{Analysis and algorithm:} We carry out a rigorous
  mathematical analysis of the optimization problem, based on which we
  propose an offline algorithm to reduce the problem size by pruning
  ``non-profitable'' routes and decision variables. We also analyze
  the feed-in and feed-out settings, following which we establish an
  equivalence between the optimal supply location problems in the two
  settings. We also present an analysis of the maximum possible
  profits.
  
\item \emph{Viability analysis:} We propose a \emph{priority-based}
  pricing model that is general enough to be applicable for monetary
  profit maximization as well as social welfare maximization. We carry
  out mathematical analysis on the parameters to come up with
  necessary conditions for the viability of the feeder service and for
  routes with multiple trips to a hub node to be chosen in an optimal
  solution. Such analysis is useful for higher level planning.
  
\item \emph{Simulations:} We present a collection of simulations on
  multiple graphs that validate our analytical results. These
  simulation results include a realistic scenario of pre-positioning
  of relief material anticipating the devastation caused by a
  cyclone. We construct the graph and various other parameters in this
  simulation based on data from Odisha, a state in India that is prone
  to severe cyclones.
  
\item \emph{Comparisons with microscopic problem:} Through
  simulations, we also compare the macroscopic problem with the
  microscopic problem, wherein the demands, supplies and service
  pick-ups are all integers. We show that the macroscopic problem
  gives very good bounds on the optimal objective value for the
  microscopic problem. Moreover, the solution to the microscopic
  problem on the reduced route set, obtained from the proposed
  macroscopic analysis, is shown to be only slightly suboptimal
  compared to the solution of the microscopic problem on the original
  complete route set. The computational effort required to solve the
  microscopic problem on the reduced route set is significantly lesser
  compared to the effort required to solve the microscopic problem on
  the complete route set. Thus, the analysis on the macroscopic
  problem, and the resulting route reduction algorithm is very useful
  even for solving the more realistic microscopic problem.
\end{itemize}

On the modeling front, demand anticipative mobility problem
in~\cite{FR:2017,MS-etal:2018, MP-SLS-EF-DR:2012,RZ-MP:2016,
  AV-WG-SS:2017} though is for a macroscopic network flow model, it is
concerned with designing steady state flows whereas we consider
routing in a fixed hard time window. Similarly, ride sharing and taxi
dispatch do not address fixed horizon planning necessary for single
events, such as in disaster relief. Compared to the problems on
disaster relief~\cite{MA-etal-2015, MH-etal-2012, JMF-etal-2018,
  AMC-etal-2008, RL-ZJ-2019}, we present a unified model that is
applicable for both disaster response as well as non-humanitarian
applications such as transportation for large public events. Our model
can also incorporate priorities for demand at different nodes,
depending on say the local severity of a disaster.

Compared to the V.R.P. literature in general, including the papers on
disaster relief, we start with a macroscopic formulation and carry out
rigorous mathematical analysis on the optimal solutions. The resulting
route reduction algorithm and the demonstration of its computational
efficiency in solving even the microscopic problem is unique, to the
best of our knowledge. Similarly, the kind of viability analysis that
we do is typically missing in the V.R.P. and network flow literature.

In the preliminary version~\cite{SG-PT:2019-ecc} of this paper, we
considered only the feed-in problem and supply location problem. Here,
we additionally provide results for the feed-out problem and introduce
a model for carrying out the analysis on the viability of the feeder
service. Moreover, here we also provide the proofs missing
in~\cite{SG-PT:2019-ecc}. We also have an expanded simulation section
where we compare the macroscopic and microscopic versions of the
problem. Furthermore, we present a realistic application of planning
the relief distribution post a cyclone.

\subsubsection*{Organization} The rest of the paper is organized as
follows - we set up the \emph{one-shot} problem in
Section~\ref{sec:prob-setup}. In Section~\ref{sec:opt-sol} we describe
the properties of its optimal solutions and the \emph{off-line
  route-set reduction} method.  We present the priority-based pricing
model in Section~\ref{sec:pricing}. In Section~\ref{sec:max-profits},
we discuss the supply location problem and analyze maximum possible
profits with a given total supply for a given demand configuration. We
also show the equivalence with the supply optimized feed-out in this
section. We present the simulation results in
Section~\ref{sec:results} and provide concluding remarks in
Section~\ref{sec:conc}. We present the proofs of all but the main
theorems in appendices.

\subsubsection*{Notation}

We use $\integers$, $\nat$ and $\natz$ for the set of integers,
natural numbers and $\nat \cup \{0\}$, respectively. We use
$\intrangecc{a}{b}$ and $\intrangeoc{a}{b}$ to denote
$[a,b] \intersect \integers$ and $(a,b] \intersect \integers$,
respectively.

\section{Problem Setup} \label{sec:prob-setup}

In this section, we setup the coordinated first and last mode
transportation problem using a macroscopic formulation. Since this is
a network-flow problem, we first present a graph model, and then
define the route sets used for optimization. Finally, we present the
optimization model.

\subsection{Network Setup}
%


We assume that the operator wants to provide service in a region of
interest, which we model as a graph $G\ldef (V,E)$, where $V$ is the
set of nodes and $E$ is the set of edges between the nodes. Here, each
node $l \in V$ represents an area within the region of interest. For
both the feed-in and feed-out problems we have a special node, which
we call as the \emph{hub}, denoted by $H$. In the case of a feed-in
problem we have a demand flow from the areas of the region directed to
the hub $H$, while the opposite happens for the feed-out
problem. Thus, in both these problems, we assume that for each node
$l$ there is a demand $d_l \geq 0$ that needs to be serviced. This is
either spatially distributed in the region for the feed-in problem, or
is concentrated at the hub $H$ for the feed-out problem.  Furthermore,
we assume that the operator has a supply of $S_l\geq 0$ stationed at
each node $l$.

If there exists a direct connection from a node $l \in V$ to a node
$k \in V$, then we have a directed edge $(l,k)\in E$. Thus the
edge-set, $E$, represents the set of all direct connections within the
service region. Each edge has two weights $\rho_{lk}$ and $t_{lk}$,
which represent the per-unit flow travel cost and travel time,
respectively. We assume that the flows in our model do not affect the
congestion and thereby we assume that $\rho_{lk}, t_{lk}$ are not
affected by the flows we define later on.

We have the following assumptions for the supply configuration for
both the problems.
\begin{enumerate}[label=\textbf{(A\arabic*)},
  ref=\textbf{(A\arabic*)}] 
\item\label{A:1} For the feed-in problem, we assume that
  $\{S_l\}_{l \in V}$ is not necessarily optimized to service the
  demand.
\item\label{A:2}  The demand configuration $\{ d_l \}_{l \in V}$ is
fixed. Also, $d_l > 0$ for each node $l \neq H$ with $d_H = 0$, in both problems.
\item\label{A:3} 1 unit demand is serviced by 1 unit supply.

\end{enumerate} 
Assumption~\ref{A:1}, reflects a situation where the operator does not
have enough time to move the supply to demand-specific optimal
locations. These situations can arise specially in cases of
emergencies, where the situation demands immediate actions from the
operator. As an example, consider an emergency evacuation procedure in
case of an impending natural disaster. In~\ref{A:2}, we make the
assumption that for the nodes $l \neq H$ for which $d_l = 0$ purely
for ease of exposition and there is no loss of generality and does not
affect the analysis in any significant way. The assumption that
$d_H = 0$ is justified because the time and cost are `close' to zero
for travelling within the node $H$. Assumption~\ref{A:3} can be
justified by the fact that a unit of demand and a unit of supply can
refer to entities of different scales. As an example a unit demand can
be 100 parcels in a courier service whereas a unit supply can be a
single truck.

\subsection{Feasible Route Sets} \label{subsec:graph-mod}

We define a route $r \ldef (V_r, E_r)$ as a \emph{walk} in the graph
$G$. For a route $r$, $V_r$ is the sequence of nodes along the route,
with $V_r(j)$ being the $j^{\text{th}}$ node on the route $r$, whereas
$E_r(j)$ is the $j^{\text{th}}$ edge on the route $r$. Therefore,
\begin{align}
  &\map{ V_r }{ \intrangecc{1}{n_r} }{ V }, \ \text{s.t. }(
    V_r(j), V_r(j+1) ) \in E , \label{eq:nodeset}
  \\  & \map{ E_r }{ \intrangecc{1}{n_r-1} }{E }, \ \text{s.t. }
        E_r(j) = (V_r(j),V_r(j+1)),  \label{eq:edgeset}
\end{align}
where $n_r$ is the number of nodes (possibly repeated) on the route
$r$. We label the \emph{origin} and the \emph{destination} node of the
route $r$ as $o_r\in V$ and $D_r \in V$, respectively. That is
$o_r = V_r(1)$ and $D_r = V_r(n_r)$. We also define
$\displaystyle t_r \ldef \sum_{(l,k) \in E_r}t_{lk}$ as the
\emph{route travel time}.

We partition each route into \emph{legs}, which are sub-routes that
either originate and/or terminate at the hub $H$. We denote the
$i^\thh$ leg of route $r$ by $r^i = (V_r^i, E_r^i)$ with $V_r^i$ and
$E_r^i$ defined in the same manner as $V_r$ and $E_r$ respectively.
We classify all legs into two types. A \emph{circular leg} is one that
originates and terminates at the hub $H$, while a \emph{simple leg} is
one in which exactly one of the origin, $o_r$, or the destination,
$D_r$, is the hub. We identify the number of legs in a route $r$ by
$\theta_r \in \nat$. If $\theta_r = 1$ then we say $r$ is a
\emph{simple route}. We define a \emph{cycle} in a leg of a route as
any sub-sequence of nodes in $V_r^i$ which starts and ends at the same
node. Considering routes with multiple legs is particularly useful
when the supply is not capable of meeting the demand in one go, in
which case feeders can service the demand with multiple stops at the
hub $H$.

The operator has an objective of completing the trips of the feeder
vehicles within a hard time window of length $T$. Thus, we need to
identify the routes that are feasible for each specific problem under
consideration. We present here the feasible route sets for the
\emph{one-shot feed-in} and the \emph{one-shot feed-out} problems. In
the feed-in problem, the hub $H$ is the sink of all the demand
flows. Thus, a route $r$ is feasible in the feed-in problem if its
destination is $D_r = H$ and travel time $t_r \leq T$. We define the
feasible feed-in route set $\Rin$ as
\begin{align}\label{eq:rset}
  \Rin := \left\{ r\ |  \ D_r = H, \ t_r \leq T  \right\} .
\end{align}
Note that there exists a feasible route that passes through a node $l$
if and only if there is a feasible route $r$ with $o_r = l$.  In
general, there may be nodes through which no feasible route passes and
can be removed from $G$ without loss of generality.

For the feed-out problem the source of all demand 
is $H$. Therefore for a route $r$ to be feasible for the feed-out 
problem it 
should pass through the hub. Thus, the feasible feed-out route set 
$\Rout$ is
\begin{align}
  \Rout &\ldef \Rhat \cup \{r \vert o_r \neq H,\ t_r \leq T, \theta_r
          \geq 2\} , \label{eq:Rout}\\
  \Rhat&\ldef \{r \vert\ o_r = H, \ t_r \leq T \} . \label{eq:Rhat}
\end{align}
Here, $\Rhat$ denotes the set of all routes that originate from the
hub $H$ and have a route travel time less than or equal to $T$. Thus,
in the feed-out problem, we may have multi-origin multi-destination
routing. Since we want to consider both feed-in and feed-out problems
in a unified framework, we use the $\Mbf{R}$ as a general notation for
$\Rin$ or $\Rout$ whenever the discussion is not particular to only
one of the problems. Finally, we define $\srs{l, k, \Mbf{R} }$ as the
set of all simple routes $r \in \Mbf{R}$ that originate at node $l$
and terminate at $k$. Thus,
\begin{equation}
  \label{eq:simple-routes}
  \srs{l, k, \Mbf{R}} \ldef \{ r \in \Mbf{R} \ \vert \ o_r = l,\  D_r = 
  k, \ \theta_r = 1 \} .
\end{equation}

\subsection{Decision Variables and Constraints}
\label{subsec:constr}

We now introduce the decision variables and the constraints of the
problem. For each $r \in \Mbf{R}$, $f_r$ represents the \emph{volume
  of feeders taking a route $r$}.  Note that the route set $\Mbf{R}$
is an exhaustive set of all feasible routes given the time window
$T$. Therefore, if a route $r \in \Mbf{R}$ makes multiple visits to
the Hub $H$ then the route set $\Mbf{R}$ also contains a route
corresponding to every subroute of $r$ as well as each route formed by
retaining the first leg of route $r$ and permuting its secondary
legs. Thus, we assume that all the feeders in the flow $f_r$ traverse
the full route $r$. We let a \emph{service tuple} $\stup$ denote a
service on node $l$ in the $i^\thh$ leg of route $r$. We let $\fril$
denote the \emph{volume of demand the operator serves on $\stup$}. We
call $\fril$ \emph{service allocation} in short. In the feed-in
problem, $\fril$ is the demand that is picked up at $\stup$ while in
the feed-out problem it is the demand that is dropped off at
$\stup$. We let $F_l$ be the \emph{total demand served on a node
  $l$}. We say a service $\stup$ is infeasible if no demand can be
serviced at node $l$ on leg $i$ of route $r$. Even if a route
$r \in \Mbf{R}$ there may be some service tuples $\stup$ on the route
$r$ that may not be feasible. For example, consider a route not
originating from $H$ in the feed-out problem. The first leg does not
have any feasible services as demand can be picked-up only after the
vehicles first reach $H$. Thus, we let $\Mbf{W}(\Mbf{R})$ be the
\emph{set of feasible service tuples given a route set $\Mbf{R}$}.
\begin{equation}\label{eq:service-constr}
 \fril = 0,\quad  \forall \stup \notin \
\Mbf{W}(\Mbf{R}). 
\end{equation}
This implies that no service is possible on $\stup$ and thus all such tuples are not considered as decision variables. 


We now present the constraints on the decision variables, which are as
follows
\begin{subequations} \label{eq:constraints}
  \begin{align}
    &F_l \ldef  \sum_{r \vert l \in V_r} \sum_{i \vert l \in V_r^i}
      \fril \leq d_l, \quad \ \forall l \in V \label{eq:dem-constr}
    \\
    &\sum_{l \in V_r^i}\fril %
      \leq f_r, \quad \forall \ i \in \intrangecc{ 1 }{ \theta_r }, \
      \forall r \in \Mbf{R} \label{eq:legconstr}
    \\
    & \sum_{r \vert o_r = l} f_r %
      \leq S_l, \quad \forall \ l \in V .  \label{eq:supplyconstr}
  \end{align}
\end{subequations}
%
Constraint~\eqref{eq:dem-constr} ensures that the total demand served
at each node $l$ is at most the total demand at the
node. Constraint~\eqref{eq:legconstr} is the \emph{leg allocation
  constraint}, which ensures that the sum of all allocations in a leg
$i$ on a route $r$ is at most $f_r$, the feeder volume on that route,
while~\eqref{eq:supplyconstr} is the \emph{supply constraint}, which
ensures that the sum of feeder volumes on all routes originating from
node $l$, is at most the supply
$S_l$. Note that the flow conservation constraint is
automatically satisfied as volumes on each route are defined
separately. Hence we do not need to consider the constraint
explicitly.

\subsection{Optimization Model}

Next we give a model for the revenues and the cost to the operator and
then we summarize the overall optimization problem from the operator's
point of view. We let $\pril$ denote the price charged per-unit
allocation for service $\stup$. In general the prices $\pril$ are
dependent on many possible factors like service-times, demands
etc. However, we initially assume that the prices are independent of
the optimization variables $\fril$ and $f_r$ as well as demand and
supply variables $d_l$ and $S_l$. In Section~\ref{sec:pricing}, we
discuss the details of the pricing model and how it may be adapted for
different applications. Since the prices are known, the revenue over
all services is
\begin{equation*}
  \sum_{\stup}\pril \fril .
\end{equation*}

We consider two different types of costs incurred by the
service-provider. First, we let the travel cost for the volume of
vehicles that take the route $r$ be $c_r f_r$, which is the product of
travel cost per-unit flow and the volume of vehicles that go on route
$r$. The cost per-unit flow, $c_r > 0$, on a route $r$ is
\begin{align*}
  c_r &:= \sum_{(l,k) \in E_r} \rho_{lk} = \sum_{i = 1}^{\theta_r}
        \sum_{(l,k) \in E_r^i} \rho_{lk} \rdef \sum_{i = 1}^{\theta_r}
        c_r^i ,
\end{align*}
where $c_r^i > 0$ denotes the per-unit traversal cost on the
$i^{\thh}$ leg of route $r$. Second, we consider the \emph{operational
  costs} (which may include incentives or commissions to the drivers
and maintenance costs). We assume the operational cost is $k_l$ units
for every unit of allocation on a node $l$. The value of $k_l$ can be
positive or negative depending on the problem.

For each of the feed-in and feed-out problems with $f_r$ and $\fril$
as decision variables, we let the general \emph{one-shot operator
  profit maximization problem} be
\begin{equation}\label{eq:equivoptmodel}
  \begin{aligned}
    & \max_{f_r, \fril} J : = \sum_{\stup \in \Mbf{W}(\Mbf{R})} \bril
    \fril - \sum_{r\in \Mbf{R}}f_rc_r
    \\
    & \text{s.t. } \eqref{eq:constraints}, \ f_r ,\ \fril \geq 0 , \
    \forall \stup\in \Mbf{W}(\Mbf{R}) , \\ \text{ where, } &\Mbf{R} = \Rin, \text{ for
      feed-in problem } \\ &\Mbf{R} = \Rout, \text{ for feed-out
      problem } ,
  \end{aligned}
\end{equation}
and where $\bril$ is the per-unit \emph{operator income} for the
service $\stup$, which we define as
\begin{equation}\label{eq:Nodeprofitability}
  \bril \ldef \pril - k_l .
\end{equation}

\begin{remark}\label{rem:op-cost}\longthmtitle{Operational costs and profits}
  A positive value of $k_l$ is useful when the operator wants to
  reduce the incentives to the drivers or other maintenance costs. A
  negative value of $k_l$ models situations when the operator seeks to
  incentivize greater servicing of the demand or when the goal is to
  maximize social welfare rather than monetary profits. This latter
  case is useful in applications such as evacuation during natural
  disasters, wherein the operator needs to evacuate as many people as
  possible, which implies that the value $\sum_l(d_l -F_l)$ should be
  as low as possible. This is possible with $k_l<0$. Thus, we use the
  word profit to mean either the monetary profit or social profit or
  welfare.  \oprocend
\end{remark}


Thus, starting with the formulation~\eqref{eq:equivoptmodel} we solve
three problems in this paper. First we present an offline method for
reducing the size of the linear program, and thereby the computational
complexity. Then, we consider the \emph{feed-in supply optimization
  problem}, which extends~\eqref{eq:equivoptmodel} by considering the
supply configuration as an optimization variable in the feed-in
problem. Using this, we calculate the maximum profits for a given
demand configuration. We then show that a \emph{supply optimized
  feed-out} is equivalent to feed-in supply-optimization
problem. Finally, we present a simple model for generating the
parameters $\bril$ suitable for two major scenarios. In the first
scenario, the operator wants to maximize monetary profits. The second
scenario prioritizes the demand serviced over monetary costs. Thus, in
the latter scenario, the objective is to maximize a measure of social
welfare.


\section{ Properties of Optimal Solutions and Off-line Route
  Elimination} \label{sec:opt-sol}

In this section, we discuss some properties of the optimal solutions
of the problem~\eqref{eq:equivoptmodel}. With these properties we
reduce the size of the problem by eliminating routes in a given feasible
route set, $\Mbf{R}$, that would never be used in an optimal
solution irrespective of the demand and supply configurations.

\subsection{Properties of Optimal Solutions} \label{subsec:prop-opt-sol}

We first present necessary conditions for a route to have non-zero
allocations in an optimal solution of \emph{one-shot operator profit
  maximization problem}~\eqref{eq:equivoptmodel}. We derive these
necessary conditions from the K.K.T. conditions
for~\eqref{eq:equivoptmodel}.

\begin{proposition}\longthmtitle{Necessary conditions for a route to
    be used in an optimal solution}\label{prop:necc-cond-opt-sol}
  Let the route-set $\Mbf{R}$ be either $\Rin$ (feed-in) or $\Rout$
  (feed-out). In an optimal solution of the general one-shot
  problem~\eqref{eq:equivoptmodel}, if $f_r^* > 0$ for $r \in \Mbf{R}$
  then the following necessarily hold.
  \begin{enumerate}[label=(\alph*),
    ref=\ref{prop:necc-cond-opt-sol}(\alph*)]
		
  \item\label{itm:necc-a} $\fril^* > 0$ for some
    $i \in \intrangecc{1}{\theta_r}$ and $l \in V_r^i$. Further, for
    each $\stup$, if $\fril^*>0$ then $\beta_r^i(l) \geq 0$.

  \item \label{lem:legs} For a circular leg $i$ in $r$, the total
    service allocation is equal to the feeder volume on that route,
    $f_r$, that is,
    \begin{align}\label{eq:alloc-prop}
      \sum_{l \in V_r^i} \fril^*  = f_r^*, \quad \forall i \text{ such
      that } o_r^i = D_r^i= H .
    \end{align}
    
  \item\label{itm:necc-b} The route $r$ as a whole does not make a
    loss, that is,
    \begin{align*}
      \sum_{i = 1}^{\theta_r} \sum_{l \in V_r^i} \fril^* \bril \geq f_r^*
      c_r .
    \end{align*}
		
  \item\label{itm:necc-c} For each $i \in \intrangecc{2}{\theta_r}$
    and for $i = 1$ if leg $1$ of route $r$ is circular, there must
    exist an $l \in V_r^i$ such that $\fril^* > 0$ and
    $\bril \geq c_r^i$.
     
  \item\label{itm:necc-d} If $r$ is a simple route ($\theta_r = 1$)
    then there must exist atleast one $l \in V_r$ such that
    $f_r^1(l)^*>0$ and $\displaystyle \beta_r^1(l) \geq c_r$. \qed
	\end{enumerate}
\end{proposition}

Proposition~\ref{prop:necc-cond-opt-sol}, which we prove in
Appendix~\ref{prf:necc-cond-opt-sol}, states that irrespective of the
supply and demand configurations, if a route is used in an optimal
solution then the following must hold for that route.
\begin{itemize}
\item There is a positive service allocation that returns non-negative
  operator income.
\item For feeders moving in a circular leg, the optimal solution does
  not have an under-utilized feeder flow as it incurs a cost without
  earning any revenue.
\item The route, as a whole, does not make a loss.
\item Every secondary leg and the primary leg, if it is circular, of
  the route does not make a loss, that is, operator income from
  service allocations in such legs is no less than the per-unit
  traversal cost of the leg itself.
\item If the route is simple then it must have at least one node with
  non-negative per-unit operator income.
\end{itemize}

Using Proposition~\ref{prop:necc-cond-opt-sol}, we formulate an
off-line route reduction method in the next
subsection.

\subsection{Offline Route Elimination} \label{subsec:route-elim}

This subsection presents the \emph{reduced route set} for the feeding
problem~\eqref{eq:equivoptmodel} for a given route set $\mathbf{R}$,
which may be equal to either $\Rin$ or $\Rout$. In general, the
feasible route-set $\Mbf{R}$ grows rapidly with the time window
$T$. As a result, the problem size also grows accordingly. Thus, in
what follows, we obtain a reduced route-set that is formed by pruning
out routes and the corresponding optimization variables that would
have a zero allocation in every optimal solution under every possible
supply and demand configurations. We obtain this by application of the
individual properties in Proposition~\ref{prop:necc-cond-opt-sol},
after eliminating the dependence on $f_r$ and $\fril$.  We first
define the reduced route set $\Mc{R}(\Mbf{R})$, and then show that
every route $r$ such that $f_r^*>0$ in some optimal solution for some
supply and demand configuration to the feeding
problem~\eqref{eq:equivoptmodel} belongs to $\Mc{R}(\Mbf{R})$.
%
%
%
%
%
%
We define
\begin{subequations}
  \begin{align}
    \Mc{R}(\Mbf{R}) &\ldef \mathbf{R}_1 \cup \mathbf{R}_2 \label{eq:Rset}
    \\
    \mathbf{R}_1 &\ldef \left\{ r \in \mathbf{R} \ \vert \ \theta_r =
                   1, w_r^1 \geq 0\right\}\label{eq:R1}
    \\
    \mathbf{R}_2 &\ldef \{ r \in \mathbf{R} \ \vert \ \theta_r >
                   1, \sum_{i = 1}^{\theta_r} w_r^i \geq 0, \ 
                   w_r^i \geq 0, \forall i > 1 \label{eq:R2}
                   \} ,
  \end{align}
\end{subequations}

where
\begin{equation*}
  w_r^i \ldef \max_{l\in V_r^i}\{ \max\{ \bril, 0 \} \} - c_r^i .
\end{equation*}
Here, $w_r^i$ represents the maximum profits that can be earned per
unit service allocation in a leg $i$ of route $r$. Thus a leg of a
route is profitable on its own only if $w_r^i\geq 0$. Therefore,
$\Mc{R}(\Mbf{R})$ is the set of routes that are
profitable. Intuitively one expects that only the profitable routes to
be used in any optimal solution. In the following theorem we prove
this intuition formally.

\begin{theorem}\label{thm:algo-1}\longthmtitle{Optimal solutions to the feed-in problem
    use only the routes from the reduced route set} \label{th:routeprune}
	For the optimization problem~\eqref{eq:equivoptmodel}, every optimal
  solution for every demand and supply configuration is guaranteed to
  have $f_r^* = 0$ and consequently $\fril^* = 0$ over all legs $i$ of
  $r$, $\forall \ r \notin \Mc{R}(\Mbf{R})$ .
\end{theorem}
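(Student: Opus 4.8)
The plan is to show the contrapositive via a swap/exchange argument: assume, for contradiction, that some route $r \notin \MbrR$ has $f_r^* > 0$ in an optimal solution of~\eqref{eq:equivoptmodel}, and derive that a strictly better (or at least no-worse, but then combined with feasibility forcing a contradiction) feasible solution exists, or directly that the necessary conditions of Proposition~\ref{prop:necc-cond-opt-sol} are violated. The key observation is that the quantities $w_r^i$ in~\eqref{eq:Rset} are exactly the offline, demand- and supply-independent surrogates for the conditions in Proposition~\ref{prop:necc-cond-opt-sol}: since $\fril \le f_r$ on each leg (by~\eqref{eq:legconstr}) and, on secondary legs, $\sum_{l\in V_r^i}\fril = f_r$ by Lemma~\ref{lem:legs}, the best per-unit-feeder revenue obtainable from leg $i$ is at most $\max_{l\in V_r^i}\max\{\bril,0\}$, so the net per-unit-feeder contribution of leg $i$ is at most $w_r^i + c_r^i - c_r^i$... more precisely, leg $i$ contributes at most $\big(\max_{l\in V_r^i}\max\{\bril,0\}\big) f_r$ to revenue and costs exactly $c_r^i f_r$ in traversal, so its net contribution is at most $w_r^i f_r$.

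First I would handle the simple-route case ($\theta_r = 1$, $r \notin \mathbf{R}_1$, hence $w_r^1 < 0$). Here $f_r^* > 0$ forces, by Proposition~\ref{itm:necc-d}, the existence of a node $l \in V_r$ with $f_r^1(l)^* > 0$ and $\brol \ge c_r = c_r^1$. But then $\max_{l\in V_r^1}\max\{\bril,0\} \ge \brol \ge c_r^1$, so $w_r^1 \ge 0$, contradicting $w_r^1 < 0$. Second, I would handle the multi-leg case ($\theta_r > 1$, $r \notin \mathbf{R}_2$): then either $w_r^i < 0$ for some secondary leg $i > 1$, or $\sum_i w_r^i < 0$. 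In the first subcase, Proposition~\ref{itm:necc-c} gives an $l \in V_r^i$ with $\fril^* > 0$ and $\bril \ge c_r^i$, whence $w_r^i \ge 0$, a contradiction. In the second subcase, I would bound the total route contribution: using Lemma~\ref{lem:legs} on secondary legs and $\sum_{l\in V_r^1}f_r^1(l)^* \le f_r^*$ on the primary leg, together with Proposition~\ref{itm:necc-a} ($\fril^* > 0 \Rightarrow \bril \ge 0$, so only nonnegative $\bril$ contribute), I get
\[
\sum_{i=1}^{\theta_r}\sum_{l\in V_r^i}\fril^*\bril \le \sum_{i=1}^{\theta_r}\Big(\max_{l\in V_r^i}\max\{\bril,0\}\Big)f_r^* = f_r^*\sum_{i=1}^{\theta_r}(w_r^i + c_r^i) = f_r^*\Big(\sum_{i=1}^{\theta_r}w_r^i + c_r\Big) < f_r^* c_r,
\]
which contradicts Proposition~\ref{itm:necc-b} (the route does not make a loss). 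This exhausts all routes $r \notin \MbrR = \mathbf{R}_1 \cup \mathbf{R}_2$, so every such route has $f_r^* = 0$; the consequence $\fril^* = 0$ over all legs then follows from~\eqref{eq:legconstr}.

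The main obstacle I anticipate is the bookkeeping in the primary-leg bound of the multi-leg case: unlike secondary legs, the primary leg need not be fully allocated, so I must be careful that replacing $\sum_{l\in V_r^1}\fril^*\bril$ by $\big(\max_{l}\max\{\bril,0\}\big)f_r^*$ is a valid \emph{upper} bound — this uses both that $\bril < 0$ terms can only decrease the sum (so dropping them is fine) and that $\sum_{l\in V_r^1}\fril^* \le f_r^*$ with each retained $\bril \ge 0$. One subtlety: if $\max_{l\in V_r^1}\max\{\bril,0\} = 0$ the primary leg's revenue bound is $0$, consistent with the formula. A secondary subtlety is ensuring the strict inequality survives: $\sum_i w_r^i < 0$ with $f_r^* > 0$ gives strict inequality as needed. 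Everything else is a direct substitution of the definitions~\eqref{eq:Rset}--\eqref{eq:R2} into the conclusions of Proposition~\ref{prop:necc-cond-opt-sol} and Lemma~\ref{lem:legs}.
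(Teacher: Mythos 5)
Your proposal is correct and follows essentially the same route as the paper: a proof by contradiction that translates the conclusions of Proposition~\ref{prop:necc-cond-opt-sol} (parts (a)--(d)) and Lemma~\ref{lem:legs} into the offline quantities $w_r^i$, handling the simple-route case via part~(d) and the multi-leg case via parts~(a)--(c) with the same chain of inequalities bounding $\sum_{i,l}\fril^*\bril$ by $f_r^*\sum_i(w_r^i+c_r^i)$. Your splitting of the multi-leg case into two explicit subcases is only an organizational difference from the paper, which verifies both defining conditions of $\mathbf{R}_2$ in sequence.
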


\begin{proof}
  We prove this result by contradiction - hence let there exist an
  optimal solution such that $f_r^* > 0$ for some
  $r \notin \Mc{R}(\Mbf{R})$. If $\theta_r =1$, then it satisfies
  Proposition~\ref{itm:necc-d} and as a consequence
  $\exists l \in V_r^1$ s.t.  $\beta_r^1(l) \geq c_r = c_r^1$, which
  implies $w_r^1\geq0$. Therefore $r\in \mathbf{R}_1$. Now, if
  $\theta_r>1$, then $r$ satisfies Proposition~\ref{itm:necc-c}. Hence
  \begin{equation*}
    w_r^i = \max_{l\in V_r^i}\{ \max\{ \bril, 0 \} \} - c_r^i \geq 0,
    \quad \forall i>1 .
  \end{equation*}
  Further, $r$ must also satisfy Propositions~\ref{itm:necc-a}
  and~\ref{itm:necc-b}, that is,
  \begin{align*}
    f_r^*c_r %
    &\leq \sum_{i = 1}^{\theta_r}  \fril^* \bril
      \leq \sum_{i = 1}^{\theta_r} \sum_{l \in V_r^i} \max\{ \bril, 0
      \} \fril^*
    \\ & \leq \sum_{i = 1}^{\theta_r} \max_{l \in V_r^i} \{\max
         \{\bril\}, 0 \}f_r^* ,
	\end{align*}
  where we have used the fact that $\fril^* > 0$ only if
  $\bril \geq 0$ for the second inequality and the third inequality
  follows from~\eqref{eq:legconstr}. Hence, $r$ must satisfy
  $\sum_{i = 1}^{\theta_r} w_r^i\geq 0$ which implies
  $r\in \mathbf{R}_2$. Therefore, in either case,
  $r \in \mathbf{R}_1\cup\mathbf{R}_2 = \Mc{R}(\Mbf{R})$.  This is a
  contradiction.
\end{proof}

In the proof we don't explicitly check Proposition~\ref{itm:necc-a}
for the route but one can verify that $\forall\ r\in \Mc{R}(\Mbf{R})$ there
exists $\bril \geq 0$ where $\fril^*>0 $ is possible for some supply
and demand configuration.  
 Thus, replacing $\Mbf{R}$ with 
$\Mc{R}(\Mbf{R})$ in Problem
\eqref{eq:equivoptmodel} causes no \emph{approximation} or \emph{loss} of any optimal
solutions. 

\section{Priority-Based Pricing Model}
\label{sec:pricing}

In this section, we propose a simple pricing model for the one-shot
problems based on the travel times and costs. This model helps in
further analysis of the one-shot model, like calculating maximum
profits presented in next section. We first present a general pricing
model and explore the question of viability of the feeder
service. Then, we specialize the general pricing model for two cases -
for maximization of serviced demand and for maximization of operator
profits. The former case is applicable in scenarios like evacuation or
relief supply at the time of disasters. The latter case is applicable
for large sporting and other public events as well as for one-shot
courier service.

\subsection{General Pricing Model}


%

We utilize the concept of \emph{value of time} (V.o.T.), which
associates a monetary cost to the travel times. In particular, we let
$\alpha$ be the monetary value of unit time. Then, the \emph{perceived
  cost} is $M + \alpha \tau$ for a transportation service that takes
$\tau$ units of time and charges a monetary price $M$. For each node
$l \in V$, we let $g_l \ldef \alpha \eta_l + \zeta_l$ be the perceived
cost for the \emph{best alternate transport}, which has a travel time
$\eta_l$ and has a price of $\zeta_l$. For the service $\stup$ to be
viable, the perceived cost of the feeder service should be less than
or equal to perceived cost for the best alternate transportation at
node $l$, that is
\begin{equation*}
\pril + \alpha\tril \leq \zeta_l + \alpha \eta_l\rdef g_l ,
\end{equation*}
where $ \tril$ is the \emph{service time} for the service tuple
$\stup$. This represents the duration of service including the waiting
time and the actual travel time for the service tuple $\stup$.

Referring to Remark~\ref{rem:op-cost}, note that since the motivation
of the operator is to maximize profits, either social or monetary,
%
%
the value of $\pril$ for any service $\stup$ has to be as
high as possible.  Thus, the optimal prices for the service $\stup$ is
\begin{equation}
\pril^*=  \zeta_l + \alpha \eta_l - \alpha\tril =
g_l - \alpha\tril . \label{eq:opt-pric}
\end{equation}

The next challenge is modelling of $g_l$, the perceived costs of the
best alternate transportation. We discuss this next, following which
we explore the question of economic viability of the feeder service.

\subsubsection{Modelling the Perceived Costs of the Best Alternate
  Transportation}

In order to systematically generate $g_l$ for each node $l \in V$, we
first assume that the best alternate transport available in the region
costs $b c_r$ and takes time $t_r$ along the route $r \in \Mbf{R}
$. The \emph{cost-factor} $b \geq 0$ signifies the cost to a passenger
of the best alternate transportation relative to the feeder
service. For simplicity, we assume it to be the same through out the
service area. We do this purely to simplify the notation and to carry
out some analysis regarding the economic viability of the feeder
service. The rest of the model and analysis in the paper can easily
handle more complicated perceived costs for the best alternate
transport.

First, we let
\begin{align}
 r(O,D,b)^* \in \argmin_{r \in \srs{O,D, \Mbf{R}}} \{\alpha t_r + b c_r\}\label{eq:best-alt-route}
\end{align}
where $r(O,D,b)^*$ is a route that the best alternate transport uses
between a given origin node $O$ and the destination node $D$. Note
that $\Mbf{R} = \Rin$ or $\Rout$ depending on the problem and we are
interested in setting one of $O$ or $D$ to the hub node $H$ and the
other to $l$, a node of interest. However, in order to avoid
cumbersome notation, we use the shortened notation of $r(l,b)^*$ as
defined in Table~\ref{table:1}.
\begin{table}[h]
  \begin{center}
    \begin{tabular}{| c | c | c | }
      \hline
      Shorthand Notation &\multicolumn{2}{|c|}{Complete Notation for Given $\Mbf{R}$}\\
      \cline{2-3}
                         &  $\Rin$ & $\Rout$  \\  
      \hline\hline
      (Origin,Destination) = $(O,D)$ & \((l,H)\) & $(H,l) $ \\ 
      \hline
      $ r(l,b)^* $ & $ r(l,H,b)^* $   &  $ r(H,l,b)^* $\\ 
      \hline
      $ c_{r(l,b)^*} $   & $ c_{r(l, H,b)^*} $ & $ c_{r(H,l,b)^*} $ \\
      \hline
      $ t_{r(l,b)^*} $ & $ t_{r(l, H,b)^*} $ & $ t_{r(H, l,b)^*} $ \\
      \hline
    \end{tabular}
  \end{center}
  \caption{Shorthand notations of route parameters for
    $\Mbf{R} = \Rin$ or $\Rout$}
  \label{table:1}
\end{table}
With this notation, we define the perceived cost of the best alternate
transport between nodes $l$ and $H$ given the route set $\Mbf{R}$ as
\begin{align}
g_l(b)&\ldef \alpha \eta_l + \zeta_l \numberthis , \quad \eta_l =
        t_{r(l, b)^*}, \quad \zeta_l = bc_{r(l,
        b)^*}. \label{eq:alt-perceived-cost}
\end{align}
Here, $\eta_l$, $\zeta_l$ and $g_l(b)$ are the travel time, cost and
the perceived cost of the best alternate transport between $l$ to node
$H$ with $(O,D)$ defined as per Table~\ref{table:1} based on
$\Mbf{R}$.

\begin{remark}\longthmtitle{Effect of the cost-factor on best
    alternate transportation}
  \label{rem:alt-transport-b}
  For a fixed $T$ and given $\Mbf{R} = \Rin$ or $\Rout$, as there are
  finitely many routes, the perceived cost $g_l(b)$ is a
  piecewise-linear, increasing, concave function of $b$ for each node
  $l \in V$. Further, at $b$, the slope of $g_l(b)$ is equal to
  $c_{r(l,b)^*}$ and the $g$ intercept is $t_{r(l,b)^*}$. Thus,
  $\eta_l$ and $\zeta_l$ are also unique for each $b$ except where the
  slope of $g_l(b)$ changes. For $b = 0$ and $b= \infty$, the routes
  $r(l,b)^*$ are the fastest and cheapest, respectively. For any $b$,
  $r(l,b)^*$ is such that $t_{r(l,b)^*} \geq t_{r(l,0)^*}$ and
  $c_{r(l,b)^*}\geq c_{r(l,\infty)^*}$. \oprocend
\end{remark}

\subsubsection{Viability of Feeder Service}

We first present conditions on the value of $b$ for the reduced route
set $\Mc{R}(\Mbf{R})$ to be non-empty and as a result for the feeder
service to be viable. An important factor for determining viability is
the operational cost/revenue $k_l$. As stated earlier, $k_l$
represents the cost that operator pays as commissions or represents
the incentive/subsidy for serving the node $l$. Thus, next we discuss
the viability of service depending on the sign of $k_l$.

\begin{lemma}\longthmtitle{Conditions on $b$ for viability of
    feeder service}\label{lem:viability}
  The reduced route set $\Mc{R}(\Mbf{R})$ is non-empty if and only if
  $g_l(b) \geq g_l(1)+k_l$, $l \in V$, where $ g_l(b) $ is given
  by~\eqref{eq:alt-perceived-cost}. Further,
  \begin{enumerate}[label = {(\alph*)},
  	ref=\ref{lem:viability}(\alph*)]
  	\item If $k_l > 0$ $\forall l \in V$ then $\Mc{R}(\Mbf{R})$ is non-empty only if $b > 1$.
  	\item If $k_l<0$, for all $l \neq H$ and $b= 1$, then
          $\Mc{R}(\Mbf{R})\neq \emptyset$.
  \end{enumerate}  
\end{lemma}

We present the proof in Appendix~\ref{prf:viability}. Next, we present
necessary conditions for the viability of \emph{multi-service
  routes}. These are routes with multiple services over different
legs, viability of which helps service more demand with lesser supply.

\begin{proposition}\label{prop:multi-legged} \longthmtitle{Necessary
    value of $b$ for \emph{multi-service viability}}
  Consider the graph $G$, the route set
  $\Mbf{R} \in \{ \Rin, \Rout \}$ and the corresponding $(O,D)$
  defined in Table~\ref{table:1}. Suppose $k_H > 0$ and let $c^*(p,q)$
  be the cheapest travel cost from $p$ to $q$ for route set
  $\Mbf{R}$. Suppose that $\exists (r, i_1, l_1) \in \Mbf{W}(\Mbf{R})$
  and $\exists (r, i_2, l_2) \in \Mbf{W}(\Mbf{R})$ for a route
  $r\in \Mc{R}(\Mbf{R})$, two distinct legs,
  $i_1, i_2 \in \intrangecc{1}{\theta_r}$ and $l_1, l_2 \in V_r$,
  where $\Mbf{W}(\Mbf{R})$ is defined as
  per~\eqref{eq:service-constr}. Then,
  \begin{enumerate}[label = {(\alph*)},
    ref=\ref{prop:multi-legged}(\alph*)]
  \item \label{prop:necc-suff-b-multi-legs}
    $g_l(b)\geq g_l(1)+ c^*(D,O)+k_l$, for some $l \in V$, $l\neq H$
  \item \label{prop:necc-b-multi-legs}
    $\displaystyle b \geq \hat{b}_l^*$ for some $l \in V$, $l \neq H$,
    where
    $\displaystyle \hat{b}_l^* \ldef \left( \frac{k_l+ c^*(D,O)+
        \alpha(t_{r(l,1)^*}-t_{r(l,\infty)^*})+ c_{r(l,1)^*} }{
        c_{r(l,0)^*}} \right)$
  \item \label{prop:necc-b-multi-legs-kl>0}%
    If $k_l \geq 0$ for all $l \in V$ then $\exists l \in V$,
    $l \neq H$ such that $\displaystyle b\geq b_l^*$, where
    $b_l^* \ldef \left(1 + \frac{k_l+ c^*(D,O) +
        \alpha(t_{r(l,1)^*}-t_{r(l,\infty)^*})}{ c_{r(l,1)^*}}
    \right)$.
  \end{enumerate}
  Also, $b$ calculated by (a) is greater than $\hat{b}_l^*$. Also,
  $b_l^* \geq \hat{b}_l^*$.
\end{proposition}
The proof is given in Appendix~\ref{prf:multi-legged}. Note that the
assumption that $k_H > 0$ causes no loss of generality in our
problem. This is because in both the feed-in and feed-out problems, we
never have demand that wants to go from $H$ to $H$. So, we can choose
$k_H > 0$, without affecting any optimization variables.

With Proposition~\eqref{prop:multi-legged}, given the graph $G$, the
value of time $\alpha$ and the value of $b$, the operator can evaluate
\emph{multi-service} viability of the feeder service. Also, given a
node $l$, Proposition~\ref{prop:necc-suff-b-multi-legs} can be
interpreted as a necessary condition for viability of a single depot
V.R.P. service.  Note that we give three necessary conditions in
Proposition~\ref{prop:multi-legged}. The bound in claim~(a) requires
the computation of $g_l(b)$ for each $b$. In comparison, condition~(b)
is a computationally simpler relation but provides a bound lower than
the one in condition~(a). Condition~(c) provides a stricter bound than
condition~(b) under the assumption that $k_l>0$ for all $l \in
V$.

\begin{remark}\label{rem:reduced-routes}
  \longthmtitle{Reduced route set reaches a constant as the
    destination time $T$ is increased} %
  There exists a time $T^*$ such that for all $T \geq T^*$, the set
  $\Mc{R}(\Mbf{R})$ is the same. This is because even though the set
  $\Mbf{R}$ includes more and more routes as $T$ increases, for a
  route to be in $\Mc{R}(\Mbf{R})$, the service times $\tril$ cannot
  exceed a certain value while satisfying $\bril \geq 0$. Furthermore,
  even for services with lower service times, higher costs would
  render longer routes unprofitable. Thus $\Mc{R}(\Mbf{R})$ do not
  have such routes. This is particularly useful as the set $\Mbf{R}$
  and problem~\eqref{eq:equivoptmodel} keeps growing with $T$, whereas
  the size of~\eqref{eq:equivoptmodel} with $\Mc{R}(\Mbf{R})$ instead
  of $\Mbf{R}$ does not grow for $T \geq T^*$. \oprocend
\end{remark}
 
\subsection{Pricing Models for Different
  Scenarios} \label{sec:pricing-scenarios}

We next demonstrate the generality of our model and explain the
interpretations of $\bril$, $\tril$, $k_l$ and thereby the objective
of~\eqref{eq:equivoptmodel} for two broad scenarios.

\subsubsection{Demand Maximization}

In the first scenario, the operator aims to maximize the social
welfare by maximizing the demand that is serviced. This case is
applicable in disaster management scenarios, where the goal is to
maximize the evacuation (feed-in) or relief efforts
(feed-out). 
Thus in this scenario, one could let the service time $\tril$ for
$\stup$ be the drop-off time with respect to a suitable reference
initial time. In other words, the service time can include both the
time to wait for the feeders to pick-up as well as the travel time
from node $l$ to the drop-off location on leg $i$ of route $r$.

Next, $k_l$ in this scenario can represent a sense of priority with
which a region $l$ must be serviced. Referring to
Remark~\ref{rem:op-cost}, we thus let $k_l$'s to be negative. For
example, regions with greater danger of loss of life could have a more
negative value of $k_l$. Alternatively, $k_l$ could also represent
monetary incentives or subsidies that the operator would receive from,
say the government, for servicing the region $l$. The ``operator
income'' $\bril$ in this scenario may represent the ``social revenues"
rather than monetary income for a service $\stup$. Alternatively,
$\bril$ could also represent monetary income if $k_l$ represents
monetary incentives or subsidies that are given to the operator.


Thus in this scenario, the objective of~\eqref{eq:equivoptmodel} is to
maximize social welfare in a cost efficient and coordinated
manner. Furthermore, Lemma~\ref{lem:viability} and
Proposition~\ref{prop:multi-legged} provide conditions when relief
efforts are not viable either due to lack of time before the disaster
or due to the costs.

\subsubsection{Profit Maximization}

In this scenario, the operator seeks to maximize the monetary
profits. Thus, $\tril$ may or may not include wait times depending on
the specific instance of the problem. The parameter $k_l$ could be the
fixed costs that the operator needs to pay for all the expenses such
as maintenance of the fleet, commissions to the drivers etc. In this
scenario, it is also possible that a region $l$ may pay incentives for
availing service from the operator. Therefore, $k_l$ can potentially
be negative though in general it is positive for this scenario. The
objective in~\eqref{eq:equivoptmodel} therefore represents the
monetary profits. This scenario is applicable for coordinated
transportation services in the context of less frequent events like
long distance courier services, where holding at the main airport hub
is costly and parcels need to arrive and depart just in time; or large
social gatherings or sporting events.

\section{Feed-in Supply Optimization Problem and Equivalence to One-Shot Feed-out}
\label{sec:max-profits}

Until the previous section, we discussed the one-shot feeding problem,
given a supply and demand configuration. However, an operator may also
be interested in determining the best possible supply configuration
for a demand configuration so that profits or social welfare are
maximized. In the scenarios where evacuation or relief efforts in the
context of a disaster are the goal, predictive models of the impact of
a disaster may be available. This information could be used to
pre-position the service vehicles well before the evacuation or relief
distribution begins. In this section, we first discuss the problem of
supply optimization for feed-in. Then, we discuss supply optimization
for feed-out and show its ``equivalence'' to supply optimized feed-in
under mild assumptions.

We assume that the demand configuration and the total available supply
$s$ are given. We let the supply at the nodes $\{S_l\}_{l \in V}$ also
be optimization variables. Then, \emph{feed-in supply optimization
  problem} can be stated as
\begin{align}
  & \max_{S_l, f_r, \fril} \bar{J} : = \sum_{(r,i,l)} \bril \fril - \sum_{r \in
    \Rin}f_rc_r \notag
  \\
  & \text{s.t. }  \eqref{eq:constraints}, \ \sum_{l \in V} S_l \leq s,
    \ \forall\ l \in V_r \notag
  \\
  &f_r ,\ \fril, \ S_l \geq 0, \ \forall r \in \Rin, \
    \forall i \in \intrangecc{1}{\theta_r}, \ \forall l \in V_r
    .  \label{eq:supply-opt}
\end{align}

For analysing the maximum profit as a function of the total supply
$s$, we make the following assumptions.%
\begin{enumerate}[label=\textbf{(A\arabic*)},
  ref=\textbf{(A\arabic*)}]
  \setcounter{enumi}{3}
\item \label{A:4} For each node $l$ in the graph $\exists\ r \in \Rin$
  s.t. $o_r =l$, $\theta_r =1$ and $\beta_r^1(l) - c_r^1 \geq 0$.
\end{enumerate}
Note that there is no loss of generality in Assumption~\ref{A:4}. We
elaborate on this point in the sequel.

\subsection{General Properties of Feed-in Supply
  Optimization} \label{subsec:supply-optim-prop}

In the following proposition, we give some properties of the optimal
solutions of the \emph{feed-in supply optimization problem}
\eqref{eq:supply-opt}. From the result, we can also reason that there
is no loss of generality in the Assumption~\ref{A:4}. We present the
proof of the result in Appendix~\ref{prf:supply-opt}.

\begin{proposition}\longthmtitle{Properties of optimal supply
    configurations and allocations}
  \label{prop:supply-opt} 
  In every optimal solution to the problem~\eqref{eq:supply-opt}, the
  following hold:
  \begin{enumerate}[label=(\alph*), ref=\ref{prop:supply-opt}(\alph*)]
  \item \label{itm:sup0} If $f_r^* >0$ then $r\in \MbrR$

  \item \label{itm:sup1} For each $r \in \MbrR$,
    $f_r^1(o_r)^* = f_r^*$. Consequently,
    \begin{align}\label{eq:no-redund-flow}
      \sum_{l \in V_r^i} f_r^i(l)^* = f_r^*, \quad \forall i
      \in \intrangecc{1}{\theta_r}, \quad \forall r \in \MbrR ,
    \end{align}
    and no route originating from $H$ is used.
	
  \item \label{itm:sup2} For a route $r\in \MbrR$ if $f_r^*>0$ then
    $\beta_r^1(o_r)\geq c_r^1$. Consequently,
    $\forall i \in \intrangecc{1}{\theta_r}$, $\exists\ l \in V_r^i$
    such that $\fril^*>0 $. Moreover, for $\stup$, $\fril^* > 0$ only
    if $\bril \geq c_r^i$.
  
  \item \label{itm:sup3} If a node $l$ does not satisfy the property
    in~\ref{A:4}  then $\fril^* = 0$ for all $\stup$ that serve the
    node $l$.
%
%

  \item \label{itm:sup4} A route $r$ with a cycle in the first leg is
    not used, that is $f_r^* = 0$. 
  
  \item \label{itm:sup5} If $\displaystyle s\leq \sum_{l \in V} d_l$
    and \ref{A:4}
    holds then
    $\displaystyle \sum_{r \vert o_r = l} f_r^* = S_l \leq d_l, \
    \forall l\in V$. \qed
  \end{enumerate}
\end{proposition}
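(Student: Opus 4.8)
The plan rests on two ingredients. First, a \emph{reduction}: if $(\{S_l^*\},\{f_r^*\},\{\fril^*\})$ is optimal for~\eqref{eq:supply-opt}, then fixing $S_l=S_l^*$ leaves an instance of the feed-in problem~\eqref{eq:equivoptmodel}, for which the restricted solution is again optimal (a strictly better solution of that restricted problem, together with $\{S_l^*\}$, would be feasible and strictly better for~\eqref{eq:supply-opt}). Hence Lemma~\ref{lem:equivalentopt} ($\tfril^*=\fril^*$, $F_l\le d_l$), Lemma~\ref{lem:legs} (secondary legs fully allocated), Proposition~\ref{prop:necc-cond-opt-sol}, and Theorem~\ref{th:routeprune} all transfer; this already gives~\ref{itm:sup0}, and the $F_l\le d_l$ and secondary-leg parts of~\ref{itm:sup1}. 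Second, the new tool, \emph{route surgery}: since the $S_l$ are decision variables here, a route's origin supply may be relocated freely, and --- the point on which everything turns --- the revenue coefficient $\bril=g_l-\alpha\bigl(T-\tril\bigr)-1$ equals $g_l$ minus $\alpha$ times the travel time from the occurrence of $l$ on $r$ to the \emph{end} of $r$; it is therefore unchanged when legs before that occurrence are rearranged, prepended or excised, and only increases when the route's tail is shortened. Every exchange below replaces a used route by routes built from its legs, moving a single origin's worth of supply (so $\sum_lS_l$ is preserved), keeping total revenue non-decreasing, and strictly lowering traversal cost whenever a strictly shorter sub-walk is substituted (positive edge weights).

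For~\ref{itm:sup1} it remains to show $f_r^1(o_r)^*=f_r^*$ (trivial when $f_r^*=0$). If $\fril^*>0$ for some $l\in V_r^1$ with $l\ne o_r$, transfer that allocation, and the matching $\fril^*$ worth on each secondary leg, onto the tail route $\tilde r$ that follows $r$ from this occurrence of $l$ to its end: by the invariance, revenue is unchanged, while the tail of leg~$1$ is strictly cheaper than leg~$1$ --- a contradiction, so on leg~$1$ only $o_r$ carries allocation. If moreover $f_r^1(o_r)^*<f_r^*$, route the surplus $\epsilon:=f_r^*-f_r^1(o_r)^*$ of feeders on $r^2\cdots r^{\theta_r}$ instead (for a simple $r$, just lower $f_r$): revenue unchanged, cost drops by $\epsilon c_r^1$ --- again a contradiction. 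Thus $f_r^1(o_r)^*=f_r^*$, which with the imported Lemma~\ref{lem:legs} yields~\eqref{eq:no-redund-flow}; and since $F_I\le d_I=0$ forces $f_r^1(I)^*=0$, every route with $o_r=I$ has $f_r^*=0$, i.e.\ no route from $I$ is used.

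For~\ref{itm:sup2}, the statements for $i=1$ follow from~\ref{itm:sup1} and Proposition~\ref{itm:necc-d} (a used simple route has a node with $\bril\ge c_r$, which by~\ref{itm:sup1} must be $o_r$). For a multi-leg $r$ with $f_r^*>0$: if $\beta_r^1(o_r)<c_r^1$, delete leg~$1$ outright, replacing $r$ by $r^2\cdots r^{\theta_r}$ at the same volume; this removes only the leg's negative contribution $f_r^*\bigl(\beta_r^1(o_r)-c_r^1\bigr)<0$ to $\bar J$, hence strictly improves it --- contradiction, so $\beta_r^1(o_r)\ge c_r^1$. If $\fril^*>0$ on a secondary leg $i$ with $\bril<c_r^i$, excise leg~$i$: move $\delta\le\fril^*$ worth onto $r^1\cdots r^{i-1}r^{i+1}\cdots r^{\theta_r}$ (same origin $o_r$, so supply-neutral) and leave the $\delta$ at $\stup$ unserved; the re-hosted legs keep or gain revenue (shorter tails) while route cost falls by $\delta c_r^i$, for a net gain $\ge\delta\bigl(c_r^i-\bril\bigr)>0$ --- contradiction. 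That every leg carries a positive allocation is then~\eqref{eq:no-redund-flow}. For~\ref{itm:sup4}: if leg~$1$ of $r$ contains a cycle, delete it; by~\ref{itm:sup1} only $o_r$ is served on leg~$1$, and shortening the whole route raises $\beta_r^1(o_r)$ and lowers $c_r^1$ while leaving secondary legs untouched, so $f_r^*>0$ is impossible.

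For~\ref{itm:sup3}, suppose node $l$ fails~\ref{A:2} yet $\fril^*>0$ for some $\stup$. By~\ref{itm:sup1}, either $l=o_r$ or $l$ lies on a secondary leg, and in either case~\ref{itm:sup2} gives $\bril\ge c_r^i$. The sub-walk of $r$ from this occurrence of $l$ to the end of its leg is a simple route $\rho$ with origin $l$; truncation only raises the coefficient, so $\beta_\rho^1(l)\ge\bril\ge c_r^i\ge c_\rho^1$ --- a witness for~\ref{A:2}, contradicting the hypothesis. Hence $\fril^*=0$, and $f_r^*=0$ when $l=o_r$ by~\eqref{eq:no-redund-flow}. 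For~\ref{itm:sup5}, by~\ref{itm:sup1} $F_l\ge\sum_{r\vert o_r=l}f_r^1(l)^*=\sum_{r\vert o_r=l}f_r^*$, so with $F_l\le d_l$ we get $\sum_{r\vert o_r=l}f_r^*\le d_l$; since the $S_l$ enter only through $\sum_{r\vert o_r=l}f_r\le S_l$ and $\sum_lS_l\le s$, setting $S_l:=\sum_{r\vert o_r=l}f_r^*$ preserves feasibility ($\sum_rf_r^*\le\sum_lS_l^*\le s$) and the objective, so $S_l=\sum_{r\vert o_r=l}f_r^*\le d_l$; the hypotheses $s\le\sum_l d_l$ and~\ref{A:2} are precisely what make this normalized supply fully used rather than slack, since any node with $F_l<d_l$ admits a profitable simple route by~\ref{A:2}. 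The step I expect to be hardest is the feasibility-and-objective bookkeeping inside the surgeries of~\ref{itm:sup1}--\ref{itm:sup2}: checking that the per-leg constraints~\eqref{eq:legconstr} still balance after a leg is excised or relocated, that $\sum_lS_l\le s$ is respected --- which holds exactly because each surgery moves one origin's worth of supply without ever splitting it --- and that $\Delta\bar J$ has the claimed sign, this last point resting entirely on the invariance of $\bril$ under any change to the part of $r$ before the occurrence of $l$.
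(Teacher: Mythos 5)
Your proposal is correct and follows essentially the same route as the paper: part (a) via Theorem~\ref{thm:algo-1} applied to the feed-in instance obtained by freezing the optimal supply, parts (b)--(e) via sub-route/supply-relocation exchanges that exploit the invariance of $\bril$ under changes to the portion of the route preceding the pick-up at $l$ (the same device the paper uses implicitly in its sub-route constructions), and part (f) via reallocation of slack supply to profitable simple routes under \ref{A:2}. The only soft spot is the normalization step in part (f), which as phrased shows that \emph{some} optimum has $S_l=\sum_{r\vert o_r=l}f_r^*$ rather than that \emph{every} optimum does, but your closing sentence recovers the paper's actual slack-reallocation contradiction and the two arguments are at the same level of rigor there.
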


From Proposition~\ref{prop:supply-opt} we see that there is no loss of
generality in the Assumption~\ref{A:4}. This is because
if for a node $l$, $d_l = 0$ then, by Proposition~\ref{itm:sup1}, all
routes $r$ originating at $l$ have zero flow ($f_r = 0$) in every
optimal solution. Similarly, Proposition~\ref{itm:sup3} says that in
every optimal solution there is no allocation on nodes that violate
Assumptions~\ref{A:4}. With Proposition~\ref{itm:sup3} one can
eliminate the nodes that do not follow assumption \ref{A:4}.
%
%
Also, as a consequence of Propositions~\ref{itm:sup1} and
~\ref{itm:sup2}, one can eliminate the route flow variables $f_r$ and
remove routes without a \emph{profitable} pickup at their origins.
With Proposition~\ref{itm:sup4} we can eliminate every route with a
cycle in the first leg. Thus we construct the \emph{reduced route set}
for~\eqref{eq:supply-opt}, $\redR$, as
\begin{equation}\label{eq:reduced-fiso-rset}
  \redR \ldef \{r \in \MbrR\vert  o_r \neq H, \beta_r^1(o_r)\geq
  c_r^1, \text{ no cycles in } r^1   \} .
\end{equation}

Further, using Proposition~\eqref{prop:supply-opt}
and~\eqref{eq:no-redund-flow}, we can solve~\eqref{eq:supply-opt} with
equality in the constraints of~\eqref{eq:legconstr}
and~\eqref{eq:supplyconstr}. Thus, we can reduce~\eqref{eq:supply-opt}
to an optimization problem over decision variables $\fril$, the
allocations, and $S_l$, the supply at a node. This elimination of the
variables $f_r$ leads to a significant reduction in the number of
optimization variables, specifically equal to the number of routes in
$\redR$. As a result, we can express the supply optimization problem
as
\begin{align}\label{eq:max-profits}
  &\max_{S_l,\ \fril } \bar{J} = \sum_{\stup\in \Mbf{W}(\redR)} (\bril-c_r^i)\fril
  \\
  & \text{s.t. }  \ F_l \leq d_l, \ \sum_{l \in
    V_r^i}\fril = f_r, \ \sum_{r \vert o_r = l} f_r = S_l, \ \sum_{l \in V}
    S_l \leq s, \notag
  \\
  &\fril, \ S_l \geq 0, \ \forall \stup \in \Mbf{W}(\redR) . \notag
\end{align}
This is a simpler problem to solve for a sequence of values of $s$
than \eqref{eq:supply-opt}. Also, as we show in the next subsection,
this formulation makes the feed-out problem computationally simpler.

\subsubsection{Absolute Maximum Profits} \label{subsec:suf-suppl}

With the objective \eqref{eq:max-profits}, we can also analyze the
absolute maximum profits an operator can earn, over all supply
configurations, for a given demand configuration. Quantification of
the absolute maximum profits is useful for determining profitability
of the service from the operator's perspective. To arrive at the value
of \emph{absolute maximum profits}, denoted by $J_{\max}$ from here
on, we assume (without loss of generality) that supply is sufficient,
i.e.  $s\geq \sum_l d_l$.
%
%
%
We denote the set of simple routes with the
maximum rate of profits for a pickup at $l$ by $\Mc{S}(l)$. Formally,
\begin{equation} \label{eq:best-routes}
\Mc{S}(l) \ldef \argmax_{r \in \srs{l, H, \Rin} \cap \redR}
\{\beta_r^1(l)-c_r \} .
\end{equation}
\begin{lemma} \label{lem:Rl-least-perceived-cost} For each node
	$l \in V$, $\beta_r^1(l)-c_r > \beta_{\bar{r}}^i(l)-c_{\bar{r}}^i$
	for all $r \in \Mc{S}(l)$, $\bar{r} \notin \Mc{S}(l)$ and
	$i \in \intrangecc{1}{\theta_{\bar{r}}}$. Further,
	$\forall r \in \Mc{S}(l)$ the perceived cost $(\alpha t_r +c_r)$ is
	the least from node $l$. \qed
\end{lemma}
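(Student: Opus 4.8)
The plan is to base everything on the identity $\beta_r^1(l) - c_r = g_l - 1 - (\alpha t_r + c_r)$, valid for every simple route $r$ with $o_r = l$: it follows from $\beta_r^1(l) = \bar{p}_r^1(l) - 1 = g_l - \alpha(T - t_r^1(l)) - 1$ and the fact that such a route keeps a passenger on the first mode for exactly $T - t_r^1(l) = t_r$. So $\Mc{R}(l)$ is precisely the set of simple routes in $\redR$ minimizing the \emph{perceived cost} $\alpha t_r + c_r$, on which $\beta_r^1(l) - c_r$ is constant. The two claims then say this minimum is undercut neither (i) by simple routes from $l$ lying outside $\redR$, nor (ii) by any leg of any route through $l$. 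Throughout I take $l \neq I$, since $\redR$ contains no route originating at $I$, so $\Mc{R}(I) = \emptyset$ and the statement is vacuous there.

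I would first compare against all simple routes from $l$ to $I$. Deleting a cycle from any walk strictly decreases both its traversal time and its traversal cost (edge times and costs are positive), so only cycle-free simple routes matter. For a cycle-free simple route $\hat\rho$ with $o_{\hat\rho} = l$, the criterion~\eqref{eq:R1} shows $\beta_{\hat\rho}^1(l) \geq c_{\hat\rho}^1 = c_{\hat\rho}$ already forces $\hat\rho \in \mathbf{R}_1 \subseteq \MbrR$, hence $\hat\rho \in \redR$; equivalently, a cycle-free simple route from $l$ outside $\redR$ has $\beta_{\hat\rho}^1(l) - c_{\hat\rho} < 0$. Conversely, Assumption~\ref{A:2} (after deleting cycles, which only raises $\beta^1 - c$) produces a route in $\srs{l} \cap \redR$ with nonnegative $\beta^1(l) - c$, so $\srs{l} \cap \redR$ is a nonempty finite set and the maximum defining $\Mc{R}(l)$ in~\eqref{eq:best-routes} is $\geq 0$. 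Hence $\Mc{R}(l) \neq \emptyset$, its routes are cycle-free with $\beta_r^1(l) - c_r \geq 0$, and for \emph{every} cycle-free simple route $\hat\rho$ from $l$ to $I$ one has $\alpha t_{\hat\rho} + c_{\hat\rho} = g_l - 1 - (\beta_{\hat\rho}^1(l) - c_{\hat\rho}) \geq \alpha t_r + c_r$ for $r \in \Mc{R}(l)$, with equality iff $\hat\rho \in \Mc{R}(l)$. The second claim of the lemma follows at once: the primary leg of any feasible route $r'$ with $o_{r'} = l$ is a simple route from $l$ to $I$ of time and cost at most $t_{r'}$ and $c_{r'}$, and deleting its cycles gives a cycle-free simple route from $l$ of perceived cost at least $\alpha t_r + c_r$; thus $\alpha t_{r'} + c_{r'} \geq \alpha t_r + c_r$.

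For the first claim, fix $r \in \Mc{R}(l)$, a feasible route $\bar r \notin \Mc{R}(l)$, and a leg $i$ with $l \in V_{\bar r}^i$. Since the feeder of $\bar r$ departs as late as possible so as to finish its entire walk by time $T$, a passenger picked up at $l$ on leg $i$ rides the first mode for $T - t_{\bar r}^i(l) = \tau + \sum_{j > i} t_{\bar r}^j$, where $\tau$ is the travel time from that pickup to the interchange terminating leg $i$; the portion of leg $i$ from that pickup to that $I$ is itself a feasible simple route $\rho$ from $l$ to $I$ with $t_\rho = \tau$ and $c_\rho \leq c_{\bar r}^i$. Writing $\hat\rho$ for $\rho$ with its cycles deleted, substituting $\beta_{\bar r}^i(l) = g_l - \alpha(\tau + \sum_{j>i} t_{\bar r}^j) - 1$, dropping the nonnegative term $\alpha\sum_{j>i} t_{\bar r}^j$, replacing $c_{\bar r}^i$ by the smaller $c_\rho$, and invoking the previous paragraph for $\hat\rho$, one gets
\begin{align*}
  \beta_{\bar r}^i(l) - c_{\bar r}^i
  &\leq g_l - 1 - (\alpha t_\rho + c_\rho) \\
  &\leq g_l - 1 - (\alpha t_{\hat\rho} + c_{\hat\rho}) \\
  &\leq g_l - 1 - (\alpha t_r + c_r) = \beta_r^1(l) - c_r .
\end{align*}

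The last and, I expect, most delicate step is to show that some inequality in this chain is strict; I would argue this by determining when all three are tight. Tightness of the first requires both $\sum_{j > i} t_{\bar r}^j = 0$ --- forcing $i = \theta_{\bar r}$, as every leg has positive traversal time --- and $c_\rho = c_{\bar r}^i$, forcing $\rho$ to be all of leg $i$, i.e.\ $l$ to head leg $i$; since a leg past the first begins at $I \neq l$, this gives $i = 1$, $o_{\bar r} = l$, and $\theta_{\bar r} = 1$. Tightness of the second then forces $\hat\rho = \rho$, so $\bar r = \rho$ has no cycles, and tightness of the third forces $\hat\rho \in \Mc{R}(l)$ by the preceding paragraph. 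Together these give $\bar r = \rho = \hat\rho \in \Mc{R}(l)$, contradicting $\bar r \notin \Mc{R}(l)$. Hence some inequality is strict and $\beta_{\bar r}^i(l) - c_{\bar r}^i < \beta_r^1(l) - c_r$, as claimed.
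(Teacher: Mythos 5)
Your proof is correct and follows essentially the same route as the paper's: both rest on the identity $\beta_r^1(l)-c_r = g_l - 1 - (\alpha t_r + c_r)$ for (cycle-free) simple routes from $l$, and both reduce any other service tuple $(\bar r, i, l)$ to a comparison with a simple route from $l$ to $I$. The difference is one of completeness rather than of method: the paper merely asserts that any multi-legged or differently-originating route has ``higher leg/route cost and lower operator revenue'' and does not separately address simple routes in $\srs{l}\setminus\redR$ or the strictness of the inequality, whereas you substantiate exactly these points via the suffix-of-leg-$i$ extraction, cycle deletion, and the final tightness analysis.
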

The lemma is proved in Appendix \ref{prf:Rl-least-perceived-cost}.
This lemma implies that the pickups in the simple routes with maximum
rate of profits are always more profitable than those of any other
route. Next, we state the properties of optimal solutions
of~\eqref{eq:max-profits} for sufficient supply.

\begin{theorem}\label{th:suff-supplies}
  \longthmtitle{Properties of optimizers under sufficient supply} %
  If Assumption~\ref{A:4} holds then the following hold.
  \begin{enumerate}[label = {(\alph*)},
    ref=\ref{th:suff-supplies}(\alph*)]

  \item \label{itm:thm6a} If $s\geq \sum d_l$ then $f_r^* =
  0$,  $\forall r \notin \Mc{S}(o_r)$ . Further, for each $l \in V\setminus \{H\}$,
    $\displaystyle F_l^* = \sum_{r \in \Mc{S}(l) } f_r^* = d_l$ and
    $S_l^* \geq d_l $.
    \item \label{lem:supply-set}
    For any total supply $s$, the set of supply configurations defined in~\eqref{eq:supply-set} always contains an optimal configuration.
    \begin{equation}\label{eq:supply-set}
    \left\{\{S_l\} \Big\vert \sum_{l \in V} S_l = s,  \ S_H = \max\{0, s- \sum_{l \neq
    	H} d_l \} \right\} 
    \end{equation}
  \item \label{itm:thm6c} The maximum profits over all supply
    configurations is
    \begin{align}\label{eq:Jmax}
      J_{max} = \sum_{l\in V} d_l \max_{r \in \Mc{S}(l)}
      \{\beta_r^1(l)-c_r^1 \} .
    \end{align}
  \end{enumerate}
\end{theorem}

\begin{proof}
  \textbf{(a):} As $s \geq \sum_l d_l$, consider a solution where,
  $\displaystyle \sum_{r \in \Mc{S}(l)}f_r^1(l) = d_l$, $\fril = 0$
  for all $r \notin \Mc{S}(l)$ for each $l \in V$, and
  $\displaystyle S_l = \sum_{r \in \Mc{S}(l)}f_r^1(l)$,
  $\forall\ l\neq H$, and $\displaystyle S_H = s - \sum_l d_l$. One
  can verify that such a solution is feasible under
  Assumption~\ref{A:4}. From Lemma~\ref{lem:Rl-least-perceived-cost},
  we know that $\beta_r^1(l)-c_r > \beta_q^i(l)-c_{q}^i$ for all
  $r \in \Mc{S}(l)$ and $q \notin \Mc{S}(l)$. Then the structure of
  the objective function~\eqref{eq:max-profits} implies that this
  solution is optimal.
  
   \textbf{(b):} Part~(b) follows from arguments given above and Proposition~\ref{itm:sup5}.
   
  \textbf{(c):} Given part~(a), we now see that the maximum profits
  must satisfy~\eqref{eq:Jmax}, in which the term indexed by $l$
  corresponds to the profits from node $l$.
\end{proof}

This theorem gives the absolute maximum profits for a given demand configuration over all supply
configurations. The value of
$J_{\max}$ is easily computable with knowledge of maximum rates of profit for simple routes and  demand configuration.

\subsection{Supply Optimized Feed-out}

In this subsection, we explore \emph{supply optimized feed-out
  problem}. We denote the variables and parameters of this problem
with a hat. Thus $\hfr$, $\hfril$ and $\hbril$ $\hat{S}_l$ represent
flow, allocations and operator income respectively. Thus with demand
configuration $\{\hdl\}$, and optimization variables $\hfr$, $\hfril$,
total allocation $\hFl$ and supply configuration $\{\hat{S}_l\}$, the
problem is
\begin{equation}\label{eq:supply-opt-fo}
\begin{aligned}
& \max_{\hfr, \hfril, S_l} J : = \sum_{\stup \in \Mbf{W}(\Rout)} \hbril
\hfril - \sum_{r\in \Rout}\hfr c_r
\\
& \text{s.t. } \eqref{eq:constraints},\quad  \sum_l \hat{S}_l \leq s , \quad \hat{S}_l \geq0 , \ \forall l \in V
\\ &\quad \hfr ,\ \hfril\geq 0 , \forall \stup\in \Mbf{W}(\Rout) .
\end{aligned}
\end{equation}
It is trivial to note that the optimal supply configuration in the
case of feed-out is to concentrate the supply at the hub $H$. Thus, we
assume that $\hat{S}_H = s$, for total supply $s$. Furthermore, all
the optimal routes lie in $\Rhat$ (see~\eqref{eq:Rhat}). With this
assumption we show that for the supply optimized feed-out problem, an
equivalent feed-in supply optimization problem exists with the same
optimization value and related optimizers.
%
Thus, for the feed-out problem, we assume a graph
$\hat{G} =(\hat{V},\hat{E})$ and construct a solution given through
the following conditions.
\begin{enumerate}[label=\textbf{(C\arabic*)}, ref=\textbf{(C\arabic*)}]
  \setcounter{enumi}{0}
\item\label{C:supply-H} For the feed-out problem we assume that
  $\{\hat{S}_l\}$ is such that $\hat{S}_H = s$ and $\hat{S}_l =0$.
\item \label{C:reverse-graph} Let the graph $G \ldef (V, E)$ be such
  that $V \ldef \hat{V}$, edge $(k,l) \in E$ iff $(l,k) \in \hat{E}$
  and $(\rho_{kl}, t_{kl}) = (\hat{\rho}_{lk}, \hat{t}_{lk})$
  $\forall (l,k) \in \hat{E}$. Also, we assume $T = \hat{T}$.
\item \label{C:reverse-demand} The supply configuration for
  \eqref{eq:supply-opt} is chosen from the supply set
  \eqref{eq:supply-set}. Also, spatial demand configurations are same,
  i.e. $\{\hdl\}=\{d_l\} $.
\end{enumerate}

Condition~\ref{C:supply-H} requires the supply to be concentrated at
$H$, which as we stated earlier is optimal for the feed-out problem.
Condition~\ref{C:reverse-graph} represents a ``reversed" graph for the
feed-out problem with same parameters. Similarly,
Condition~\ref{C:reverse-demand} ``reverses'' the demand. Next, we
give a mapping between the route set $\Rin$ and $\Rhat$.
  
\begin{remark}\longthmtitle{Equivalent
    Route-Sets} \label{rem:equiv-routes} Given a route $r$ on graph
  $\hat{G}$, let $ \rev \ldef \bar{r}$, a route in $G$ (defined by
  \ref{C:reverse-graph}) where $\theta_r = \theta_{\bar{r}}$,
  $V_r(i)= V_{\bar{r}}(n_{\bar{r}}-i+1)$ and
  $E_{\bar{r}}(n_{\bar{r}}-i)= (V_r(i+1),V_r(i)) $. Then,
  $\forall r \in \Rhat$, $\exists \rev = \bar{r}\in \Rin$ with
  $c_r = c_{\bar{r}}$ and every service tuple $\stup$ of $r$ mapping
  to $(\bar{r}, \theta_r-i+1, l)$. \oprocend
\end{remark}

We can now show that $\hbril$ and
$\beta_{\bar{r}}^{\theta_r-i+1}(l)$ are the same under
Condition~\ref{C:reverse-graph}.
\begin{lemma}\label{lem:equal_prices}
	For the feed-out problem and the corresponding feed-in problem,
	$\hbril= \beta_{\bar{r}}^{\theta_r-i+1}(l)$, where $\bar{r} = \rev$.\qed
\end{lemma}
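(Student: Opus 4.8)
The plan is to reduce the claimed identity to a purely \emph{temporal} statement — namely that the drop-off time in the feed-out route equals $T$ minus the pick-up time of the corresponding leg of the reversed feed-in route — and then verify that statement by unwinding the reversal map of Remark~\ref{rem:equiv-routes}. First I would write out both sides from their defining data. By~\eqref{eq:drop_revenue} and~\eqref{eq:feedout_prices}, with the optimal price substituted, $\hbril = \alpha(\hat{\eta}_l - \htril) + \hat{\zeta}_l - 1$; by~\eqref{eq:Nodeprofitability} and~\eqref{eq:opt-pric} applied to the route $\bar{r} = \rev$ and leg index $j \ldef \theta_r - i + 1$, $\beta_{\bar{r}}^{j}(l) = \zeta_l + \alpha\eta_l - \alpha(T - t_{\bar{r}}^{j}(l)) - 1$. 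Invoking Assumption~\ref{A:5} ($T=\hat T$, $\eta_l=\hat\eta_l$, $\zeta_l=\hat\zeta_l$), the two expressions coincide if and only if
\[
\htril = T - t_{\bar{r}}^{\,\theta_r - i + 1}(l),
\]
so the lemma reduces entirely to this equality.

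Next I would establish that equality by tracking elapsed travel times along the two routes. In the feed-out problem the service starts at $t=0$ and the feeder traverses $r$ from $o_r = I$, so $\htril$ is the sum of the traversal times $\hat t$ of the edges of $r$ that precede the occurrence of $l$ on leg $i$. In the feed-in problem the feeder on $\bar{r}$ leaves $o_{\bar r}$ at time $T - t_{\bar r}$ and reaches $l$ on leg $j=\theta_r-i+1$ after an elapsed time $\Delta$ equal to the sum of the traversal times $t$ of the edges of $\bar r$ preceding that occurrence of $l$; hence $t_{\bar r}^{j}(l) = (T - t_{\bar r}) + \Delta$ and therefore $T - t_{\bar r}^{j}(l) = t_{\bar r} - \Delta$, which is precisely the sum of traversal times of the edges of $\bar r$ that \emph{succeed} $l$. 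Now Remark~\ref{rem:equiv-routes} identifies $\bar r$ as the position-reversal of $r$, with $V_r(m) = V_{\bar r}(n_{\bar r}-m+1)$ and $E_{\bar r}(n_{\bar r}-k)$ the reverse of $E_r(k)$, while Assumption~\ref{A:3} gives $\hat t_{E_r(k)} = t_{E_{\bar r}(n_{\bar r}-k)}$; moreover the service tuple $\stup$ of $r$ corresponds to $(\bar r, j, l)$ at the matched occurrence of $l$. Consequently the edges of $r$ preceding $l$ are in a traversal-time-preserving bijection with the edges of $\bar r$ succeeding $l$, the two sums are equal, and the displayed identity follows; substituting it back into the expressions above yields $\hbril = \beta_{\bar r}^{\,\theta_r-i+1}(l)$.

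The step I expect to be the main obstacle is the bookkeeping in the middle: correctly matching, through both the leg decomposition of the routes and the index reversal $m \mapsto n_{\bar r} - m + 1$, the edges ``before $l$ on leg $i$ of $r$'' with the edges ``after $l$ on leg $\theta_r - i + 1$ of $\bar r$'', and being careful about the endpoints — the occurrences of $I$ that delimit the legs — and about nodes that may be visited more than once along a route (so that the leg index, not merely the node label, pins down which occurrence is meant). Everything else is substitution of the pricing definitions~\eqref{eq:opt-pric}, \eqref{eq:feedout_prices} and Assumptions~\ref{A:3} and~\ref{A:5}.
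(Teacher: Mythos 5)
Your proposal is correct and follows essentially the same route as the paper: both reduce the claim, via Assumption~\ref{A:5} and the pricing formulas~\eqref{eq:opt-pric} and~\eqref{eq:feedout_prices}, to the temporal identity $\htril = T - t_{\bar{r}}^{\theta_r-i+1}(l)$, and then verify it using the leg-reversal correspondence of Remark~\ref{rem:equiv-routes}. Your edge-by-edge bookkeeping merely makes explicit what the paper's proof asserts in one line (that the last possible pick-up time on $\bar{r}$ corresponds to the first possible drop-off time on $r$).
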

The proof of this Lemma is given in Appendix~\ref{prf:equal_prices}. 
 Next we show equivalence of the two problems. 

\begin{theorem}\longthmtitle{Equivalence of the feed-out problem and the
    feed-in supply optimization problem} \label{prop:equivalence}
  Under Assumption~\ref{A:4}, the feed-out problem defined by route
  set $\Rhat$ with supply concentrated at the Hub can be represented
  by an equivalent feed-in supply optimization
  problem~\eqref{eq:supply-opt} (which can be constructed as in
  Conditions~\ref{C:supply-H}-~\ref{C:reverse-demand}) with
  $\hat{J}^*(s) = \bar{J}^*(s)$, $\forall s \geq 0$. Further, for all
  optimal solutions
	\begin{enumerate}[label=(\alph*),
		ref=\ref{prop:equivalence}(\alph*)]
  \item\label{itm:equiv-b}
    $\hfril^* = f_{\bar{r}}^{\theta_r-i+1}(l)^* $, $\forall \stup$ and
    $\bar{r} = \rev$,
  \item\label{itm:equiv-c} $\hfr^* = f_{\bar{r}}^*$, $\bar{r} = \rev$,
    $\hFl^* = F_l^*$, $\forall\ l \in V$.
	\end{enumerate}
\end{theorem}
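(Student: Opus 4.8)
The plan is to establish the equivalence by exhibiting an explicit bijection between feasible solutions of the two problems that preserves the objective value, and then checking that the bijection restricts to a bijection between optimal solutions. First I would invoke Remark~\ref{rem:equiv-routes} to get the route correspondence $r \mapsto \bar r = \rev$ between $\Rhat$ and (a subset of) $\mathbf{R}$, together with the service-tuple correspondence $\stup \leftrightarrow (\bar r, \theta_r - i + 1, l)$ and the cost identity $c_r = c_{\bar r}$. Using this, define a candidate feed-in solution from a feed-out solution by setting $f_{\bar r} := \hfr$, $f_{\bar r}^{\theta_r - i + 1}(l) := \hfril$, and the supply $S_l := \sum_{r \mid o_r = l} \hat f_{\bar{r}}$ for the feed-in graph; conversely, restricting attention to supply distributions in $\Mc{S}(s)$ (justified by Lemma~\ref{lem:supply-set} and Assumption~\ref{A:4}), map a feed-in solution back the same way. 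The main content is then in two checks: (i) constraint equivalence and (ii) objective equivalence.

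For (i), I would verify term by term that the feed-out constraints \eqref{eq:feedout_constr} map exactly onto the feed-in constraints \eqref{eq:Fldefn}, \eqref{eq:constraints}, $F_l \le d_l$, $\sum_l S_l \le s$. The supply constraint \eqref{eq:suppl_constr_2}, $\sum_{r \in \Rhat}\hfr \le s$, corresponds to the total-supply bound $\sum_l S_l \le s$ because all feed-out routes originate at $I$ while their reverses originate at the various nodes $o_{\bar r} = D_r \setminus \{I\}$... more carefully, the reversed routes $\bar r$ have destination $I$ and origin equal to the last node of $r$, so $\sum_{\bar r} f_{\bar r} = \sum_r \hfr$, and this is absorbed into the total supply constraint with $S_I$ taking up the slack as in $\Mc{S}(s)$. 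The leg constraint \eqref{eq:leg-constr-fo} maps to \eqref{eq:legconstr} since reversing a walk permutes its legs and the node sets $V_r^i$ are preserved up to the index flip $i \leftrightarrow \theta_r - i + 1$. The demand constraint \eqref{eq:dem_constr}, $\hFl \le \hdl$, becomes $F_l \le d_l$ by Assumption~\ref{A:4} ($\{\hdl\} = \{d_l\}$) together with the definition $\hFl = \sum_{r,i}\hfril$ matching $F_l = \sum_{r,i}\fril$ under the tuple correspondence. I would also note that routes in $\mathbf{R}$ not arising as reverses of $\Rhat$-routes, or routes excluded when passing to $\redR$, can be given zero flow without loss by Proposition~\ref{prop:supply-opt}, so the restriction to the image of the correspondence is harmless.

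For (ii), the objective identity $\hat J = \bar J$ follows directly once Lemma~\ref{lem:equal_prices} is in hand: $\hbril = \beta_{\bar r}^{\theta_r - i + 1}(l)$, and since $c_r = c_{\bar r}$, the revenue term $\sum_{r,i,l}\hbril \hfril$ equals $\sum_{(\bar r, j, l)} \beta_{\bar r}^j(l) f_{\bar r}^j(l)$ and the cost term $\sum_r \hfr c_r$ equals $\sum_{\bar r} f_{\bar r} c_{\bar r}$. Hence feasible solutions correspond with equal objective, giving $\hat J^*(s) \le \bar J^*(s)$ and $\bar J^*(s) \le \hat J^*(s)$ — for the latter direction I use Lemma~\ref{lem:supply-set} to ensure some feed-in optimizer lives in $\Mc{S}(s)$, which is exactly the class that pulls back to a feed-out solution. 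The claims \ref{itm:equiv-b} and \ref{itm:equiv-c} are then immediate from the definition of the bijection, reading off $\hfril^* = f_{\bar r}^{\theta_r - i + 1}(l)^*$, $\hfr^* = f_{\bar r}^*$, and $\hFl^* = F_l^*$.

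The step I expect to be the main obstacle is the careful bookkeeping in (i) around the supply variable and the origin/destination flip: feed-out has all supply at $I$ and routes fanning out, whereas feed-in has distributed supply and routes converging to $I$, so one must argue that the "correct" feed-in supply distribution — namely the one in $\Mc{S}(s)$ that allocates exactly the used flow at each node and dumps the remainder $s - \sum_{l\neq I} d_l$ at $I$ — is both feasible and, via Lemma~\ref{lem:supply-set}, without loss of optimality. Getting the index arithmetic $i \leftrightarrow \theta_r - i + 1$ consistent across legs, node sets, and the leg costs $c_r^i$ (noting $c_{\bar r}^{\theta_r - i + 1} = c_r^i$ from Remark~\ref{rem:equiv-routes}) is the other place where care is needed, but it is mechanical once the route reversal map is fixed.
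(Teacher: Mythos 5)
Your proposal is correct and follows essentially the same route as the paper: the route-reversal correspondence of Remark~\ref{rem:equiv-routes}, the price identity of Lemma~\ref{lem:equal_prices} for objective equivalence, a constraint-by-constraint check (leg, demand, supply), and the use of Lemma~\ref{lem:supply-set} with the class $\Mc{S}(s)$ to handle the supply bookkeeping — which is exactly where the paper also concentrates its effort, splitting into the cases $s\leq\sum_l d_l$ and $s\geq\sum_l d_l$ via Propositions~\ref{itm:sup5} and Theorem~\ref{itm:thm6a}. The only cosmetic difference is that under \ref{A:3} and \ref{A:5} the reversal map is already a bijection between $\Rhat$ and $\mathbf{R}$, so your precaution about routes outside its image is unnecessary (though harmless).
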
 

\begin{proof}
	Each pair of $r\in \Rhat $ and $\rev = \bar{r} \in \Rin$
  satisfy the relationship given in Remark \ref{rem:equiv-routes}.
  Using Lemma~\ref{lem:equal_prices}, we see that
  $\hbril = \beta_{\bar{r}}^{\theta_r - i+1}(l)$ and given
  $T = \hat{T}$ we conclude that the cost functions are equivalent,
  i.e.  $\bar{J} \equiv \hat{J}$ under the assumption
  $\hfril = f_{\bar{r}}^{\theta_r-i+1}(l)$. Hence, it is sufficient to
  prove equivalence of the constraints in both problems.
	
	\emph{Leg Constraints}: In both problems given flows $\hfr$ and
  $f_{\bar{r}}$ the constraint for leg $i$ and route $r$ in
  \emph{feed-out} problem and the constraint
  for leg $\theta_r -i+1$ and route $\bar{r}$ for the feed-in supply
  optimization (see \eqref{eq:max-profits}) are equivalent.
	
	\emph{Demand Constraints}: Constraint $F_l\leq d_l $ are equivalent under the assumption that $d_l= \hdl$.
	
	\emph{Supply Constraints}: To show this equivalence, we let
  $\{\hat{S}_l\}_l $ be the supply configuration at the end of
  \emph{feed-out}. We know that for any route $r$, the flow terminates
  at $D_r$. Therefore, the final supply located at any node $l$ is
  $\displaystyle \hat{S}_l = \sum_{r \vert D_r= l} \hfr$. Therefore,
  Constraint~\eqref{eq:supplyconstr} for feed-out can be rewritten as
  \begin{align*}
    \sum_r\hfr = \sum_{l}  \sum_{r\vert D_r = l }\hfr = \sum_l
    \hat{S}_l \leq s ,
  \end{align*}
  which implies,
  $ \hat{S}_l = \sum_{r \vert D_r= l} \hfr, \text{ and} \quad \sum_l
  \hat{S}_l \leq s $.  Furthermore, one can see that
  $\sum_r f_r = \max\{\sum d_l, s\}$ which implies that all the supply
  reaches $H$ in feed-in problem. Now, with the assumption that supply
  configuration for \emph{feed-in} is chosen from the
  set~\eqref{eq:supply-set}, one can see that constraint
  \eqref{eq:supplyconstr} along with $\sum_{l \in V} S_l \leq s$ are
  equivalent to ones stated above as consequences of Proposition
  \ref{itm:sup5} for $s\leq \sum_l d_l$ and Theorem \ref{itm:thm6a}
  with Lemma \ref{lem:supply-set} for $s \geq \sum_l d_l$.
\end{proof}

Theorem~\ref{prop:equivalence} establishes the equivalence of the
feed-out problem on the graph $\hat{G}$ to the supply optimization
problem on the graph $G$, formed using \ref{C:reverse-graph}. Thus,
one may solve either problem and obtain a solution to the feed-out
problem. More importantly, all the properties and results of supply
optimization problem apply for the supply optimized feed-out problem,
though in a ``reversed'' manner.

\section{Simulations}\label{sec:results}

Since the general operator feeding problem~\eqref{eq:equivoptmodel} is
a linear program, we utilized CVXpy~\cite{SD-SB:2016} for modelling
the problem. We also performed simulations for a microscopic
formulation of the problem~\eqref{eq:equivoptmodel}, with integer
constraints on the optimization variables. We performed these
microscopic simulations using Gurobi~\cite{gurobi} and ECOS
optimizers. We performed all the simulations on a 3.7 GHz
i7-7$^{\thh}$ generation machine with 32 GB of RAM. We present
simulation results for several graphs that include test graphs and a
graph constructed from real world data. With these simulation results,
we illustrate various aspects of our analytical results and
comparisons with solutions of the problem in a microscopic
formulation.

\subsection{Numerical Illustration of Analytical Results}

\subsubsection{Test Graphs Utilized for Simulation}

We have utilized the two test graphs in Figure~\ref{fig:graphs} for
numerical illustration of our analytical results. The graph in
Figure~\ref{fig:20_node} has undirected or bidirected edges and edge
weights (travel time and travel cost). Some highway networks (such as
in semi-urban or rural areas) and railway networks are of this
nature. The graph in Figure~\ref{fig:23_node} has directed arcs while
the travel times and costs depend on the direction of the arc between
each pair of nodes. Such conditions are common for transportation
networks in cities.
\begin{figure}[h]
	\begin{subfigure}[b]{0.5\textwidth}
		\centering
		\includegraphics[width=\columnwidth ]{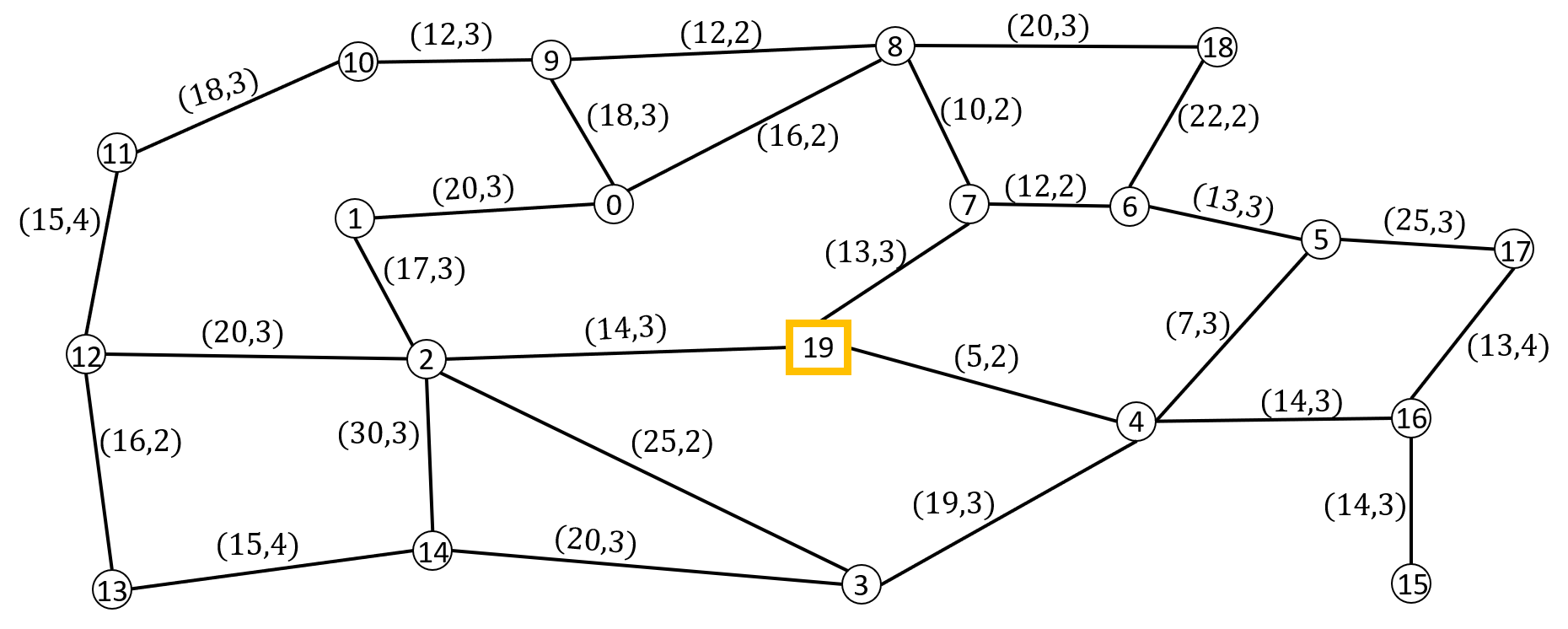}
		\caption{}\label{fig:20_node}
	\end{subfigure}%
	\begin{subfigure}[b]{0.5\textwidth}
	\centering
	\includegraphics[width=\columnwidth ]{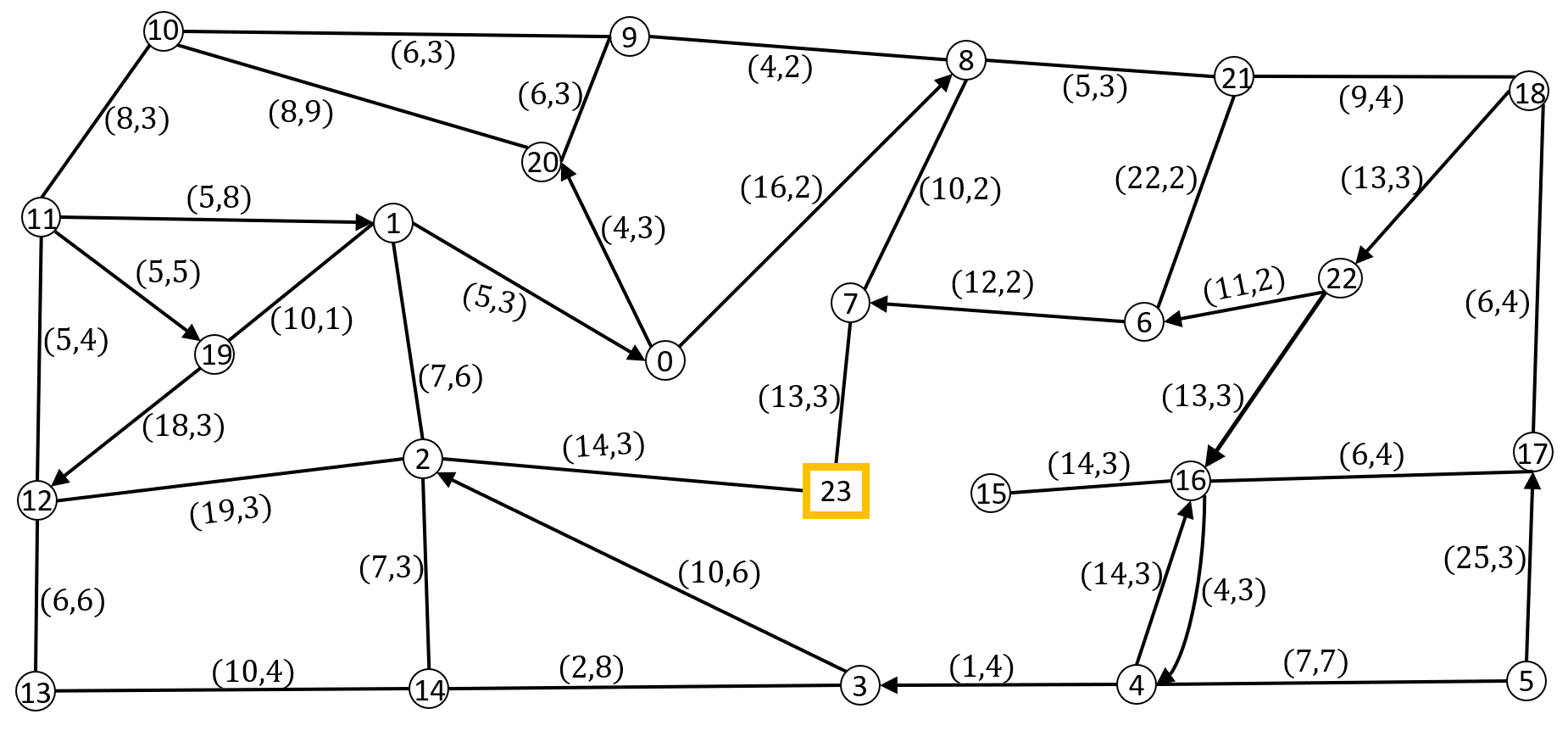}
	\caption{}\label{fig:23_node}
	\end{subfigure}
	\caption{Test graphs for simulations. (a) Graph I and (b)
          Graph II. Numbers in the circles represent node index, an
          arrow between nodes $l$ and $k$ indicates a directed arc
          from $l$ to $k$ and a line without an arrow between nodes
          $l$ and $k$ indicates a bi-directional edge. The tuple
          $(a,b)$ on the arc $(l,k)$ represents $(\rho_{lk}, t_{lk})$
          and the Hub node $H$ is marked in square.}\label{fig:graphs}
\end{figure}
%

\subsubsection{Variation of Route Set}

Here, we consider the feed-in problem for the two graphs in
Figure~\ref{fig:graphs}, that is $\Mbf{R} = \Rin$, and study the
variation in $\Mc{R}(\Rin)$ as a function of the time-window, $T$ and
the cost factor, $b$. Given $T$, we utilize a brute force method to
enumerate all the routes in a graph. Furthermore, with $b$ and $T$, we
utilize~\eqref{eq:alt-perceived-cost} and a V.o.T. of $\alpha = 0.5$
to generate the best alternate transportation time and cost, $\eta_l$,
$\zeta_l$ respectively, for each node $l$. Then, we
use~\eqref{eq:opt-pric} to generate prices for each service
tuple~$\stup$.

Figure~\ref{fig:norvsT} shows the variation of the number of routes in
$\Rin$ and $\MbrR$ as a function of $T$ for Graph~I and Graph~II. We
see an exponentially increasing trend for the number of routes in all
cases. However, as discussed in Remark~\ref{rem:reduced-routes}, one
can observe that the reduced route sets saturate eventually for large
enough $T$. Moreover, the value of $T$ at which this saturation occurs
increases with $b$. Finally, even prior to saturation, we see that
there is a significant reduction in the route set compared to the
original route set.
\begin{figure}[h]
	\begin{subfigure}[b]{0.25\textwidth}
		\centering \includegraphics[width=\textwidth]{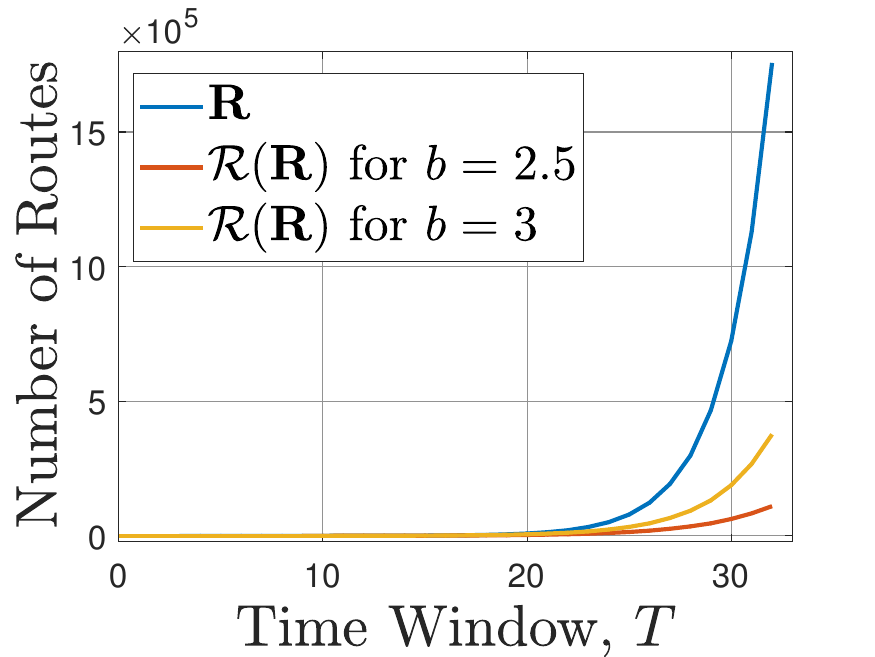}
		\caption{}
	\end{subfigure}%
	\begin{subfigure}[b]{0.25\textwidth}
		\centering \includegraphics[width=\textwidth]{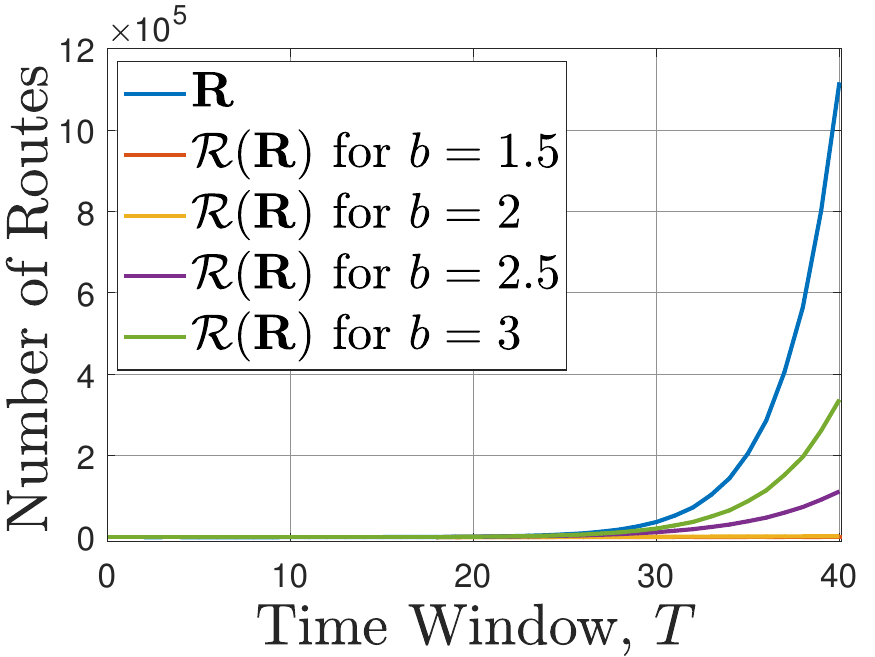}
		\caption{}
	\end{subfigure}
	\begin{subfigure}[b]{0.25\textwidth}
		\centering \includegraphics[width=\textwidth]{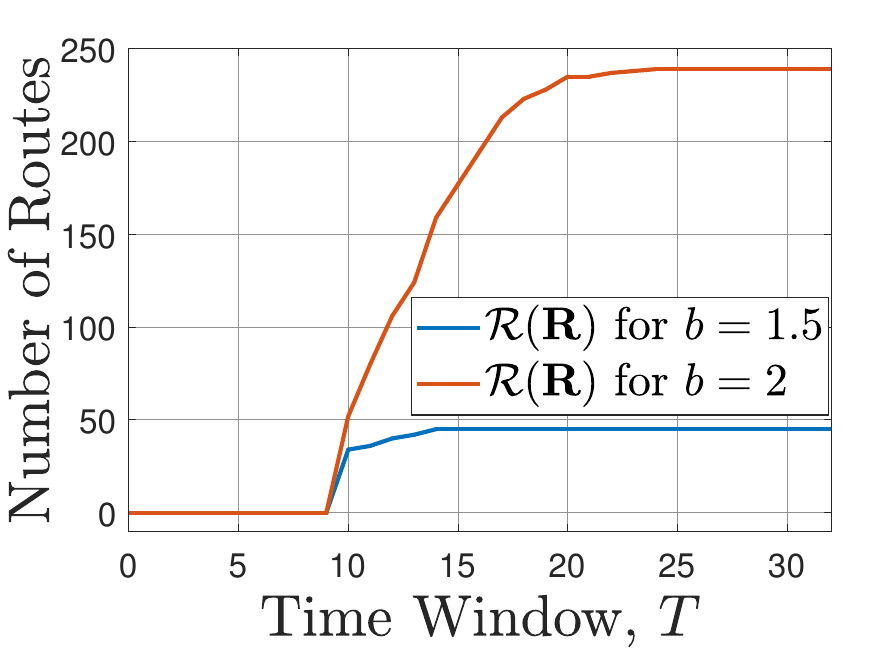}
		\caption{}
	\end{subfigure}%
	\begin{subfigure}[b]{0.25\textwidth}
		\centering \includegraphics[width=\textwidth]{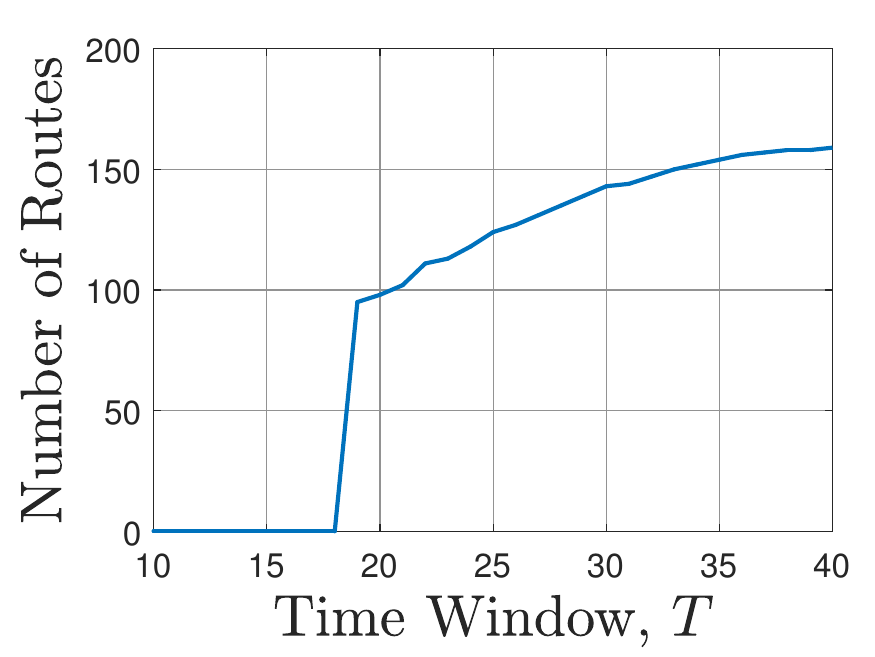}
		\caption{}
	\end{subfigure}
	\caption{ \textbf{(a), (b)}: Variation of number of
            routes in $\Mbf{R}=\Rin$ and $\MbrR$ as a function of $T$ for
          Graph~I and Graph~II, respectively.  \textbf{(c), (d)}:
          Saturation observed in $\MbrR$ with $b = 1.5, 2$ for Graph~I
          and with $b = 1.5$ for Graph~II.}\label{fig:norvsT}
\end{figure}

Figure~\ref{fig:bvsnor} shows the number of routes in $\MbrR$ as a
function of $b$ (with a step size of 0.01) for the graphs in
Figure~\ref{fig:graphs}.
\begin{figure}[h]
	\centering
	\begin{subfigure}[b]{0.5\textwidth}
		\centering \includegraphics[width=\textwidth]{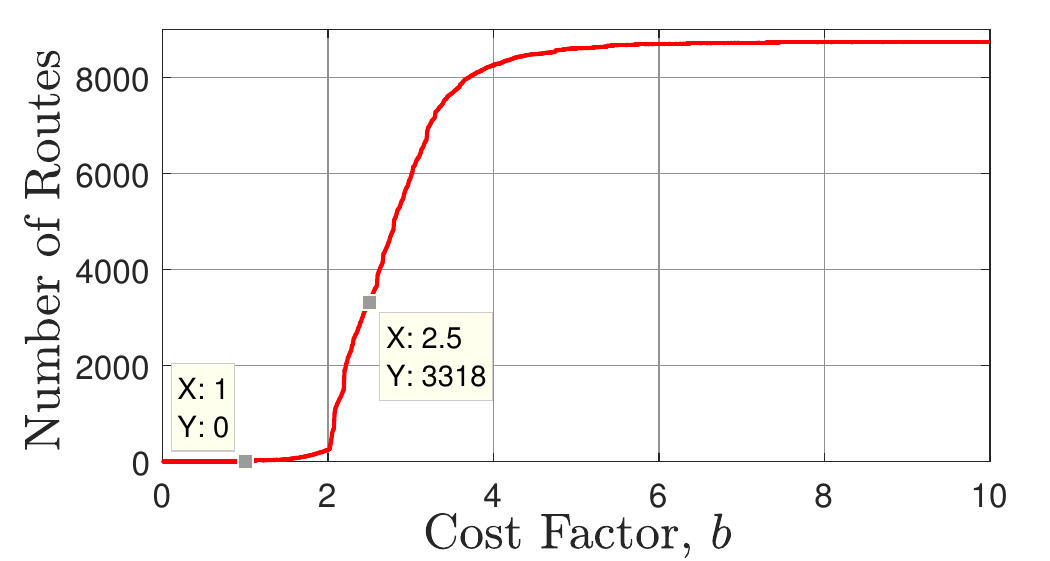}
		\caption{}
	\end{subfigure}%
	\begin{subfigure}[b]{0.5\textwidth}
		\centering \includegraphics[width=\textwidth]{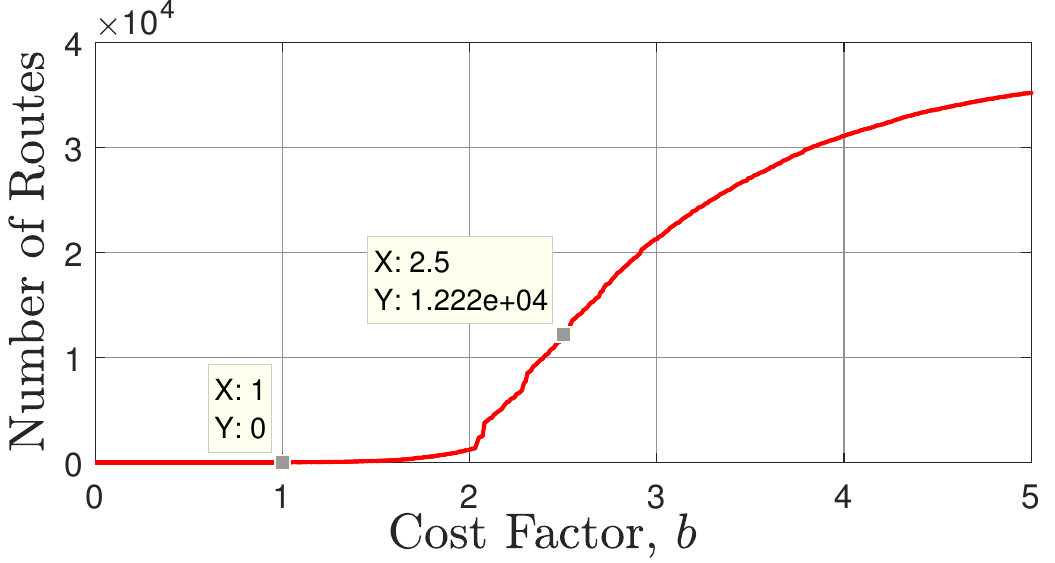}
		\caption{}
	\end{subfigure}
	\caption{Variation of number of routes in $\MbrR$ versus
          $b$ for $\alpha = 0.5$. Note that number of routes is 0 till
          $b=1$. \textbf{(a):} For Graph~I and $T= 20$. \textbf{(b):}
          For Graph~II and $T= 30$.}\label{fig:bvsnor}
\end{figure}
We note that the first route with origin as $H$ and the first
multi-legged route in $\MbrR$ occur at $b =
1.71$. Proposition~\ref{prop:necc-suff-b-multi-legs},~\ref{prop:necc-b-multi-legs}
and~\ref{prop:necc-b-multi-legs-kl>0} give necessary lower bounds on
$b$ for the existence of multi-legged routes in $\MbrR$ as $1.7027$,
$1.411$ and $1.692$, respectively. As one can see, $\hat{b}_l^*$ gives
a more conservative bound compared to $b_l^*$. In each case, the
origin of the multi-legged route is the node $l = 0$. In
Figure~\ref{fig:bvsnor}, we also see that there is a significant
increase in the number of routes around $b = 2.1$. This can be
explained by the fact that $b_l^* \in (2,2.1)$ for 5 nodes. 

\subsubsection{Feed-In and Comparison with Single-Depot Routing
  Problem for Graph II}

In the rest of this subsection, we present results for Graph II. We
set $b=2.5$ and $T = 30$, for which $\MbrR$ has 12219 routes and 45050
optimization variables. We also note that all nodes satisfy
Assumptions~\ref{A:2} and~\ref{A:4}. We simulated the \emph{feed-in}
problem for a fixed demand profile with $d_l$, for each $l \in V$,
drawn uniformly from $\intrangecc{0}{250}$. The total demand was
$\sum_l d_l = 2468$. Figure \ref{fig:Simulations} shows the simulation
results.  We utilized Proposition~\ref{prop:supply-opt} to generate
the route set $\Rin^-$ which had 6265 routes and the resulting number
of optimization variables was 12695. In Figure \ref{fig:sim_results1},
maximum profits for given total supply (marked in red line) is from
Proposition~\ref{prop:supply-opt}. The maximum profits as a function
of total supply $s$ converge to the absolute maximum profits,
$J_{\max} = 105033.5$ (given by Theorem \ref{th:suff-supplies}). We
also simulated an \emph{equivalent macroscopic} V.R.P., with all
supply at $H$, i.e.  $S_H = s$, $S_l = 0$, $\forall l\neq H$. We
observe in Figure~\ref{fig:sim_results1} that the profits earned in
this latter case are far lower, compared to that of any randomly
chosen supply configurations. This is explained by two factors -
insufficient time-window and cost-factor. Given $T=30$, 4 nodes do not
have $r\in \Rin$ such that $o_r = H$ and $l \in V_r$. Also given
$b = 2.5$, only 16 of the 23 nodes satisfy
Proposition~\eqref{prop:necc-suff-b-multi-legs}, implying at-least 7
nodes do not have $r \in \MbrR$ with $o_r = H$ and $l \in V_r$. The
necessary value of $b$ is $3.06$ for all nodes to satisfy
Proposition~\ref{prop:necc-suff-b-multi-legs}.
\begin{figure}[h]
	\centering
	\begin{subfigure}[b]{0.55\textwidth}
		\centering \includegraphics[
		width=\textwidth]{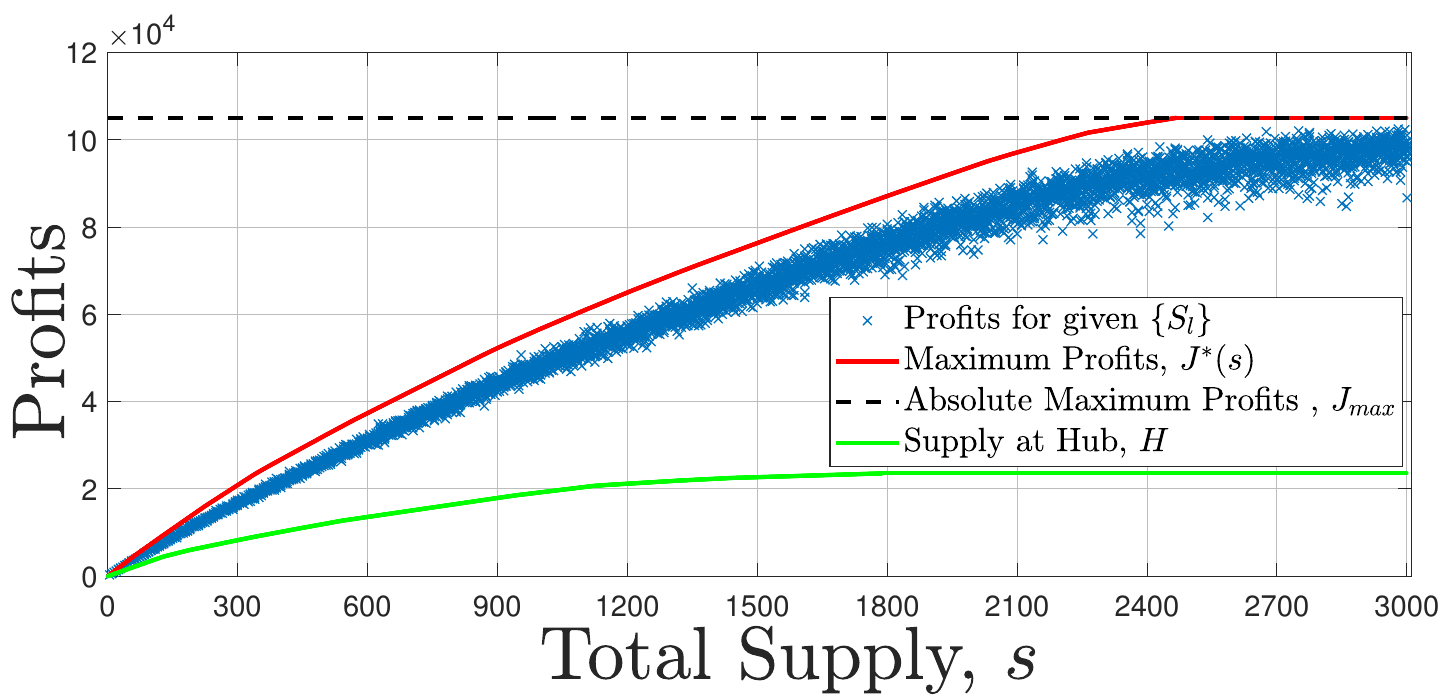}
		\caption{}
		\label{fig:sim_results1}
	\end{subfigure}%
	\begin{subfigure}[b]{0.45\textwidth}
		\centering \includegraphics[width=\textwidth]{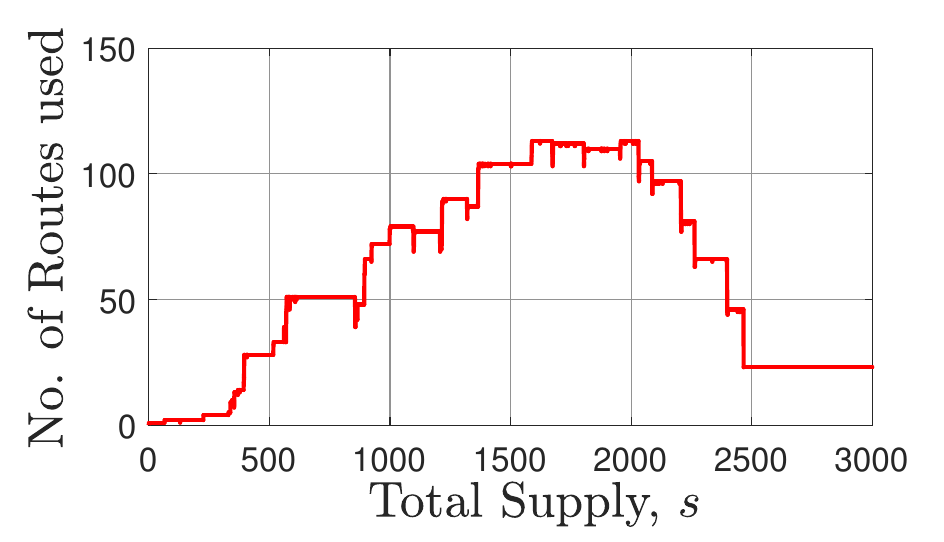}
		\caption{}
		\label{fig:sim_results3}
	\end{subfigure}%
	\caption{Simulation results for feed-in problem. \textbf{(a):}
          The red curve is the maximum profits as a function of total
          supply. The green curve is the maximum profits, with all
          supply at $H$, as a function of total supply $s$. The blue
          scatter points are for different supply configurations, with
          different supplies $s$. \textbf{(b):} Number of routes
          utilized for obtaining maximum profits in
          (a). } \label{fig:Simulations}
\end{figure}


In Figure \ref{fig:sim_results3}, we also see the number of routes
used to generate maximum profits generally increases with $s$ though
after a point the number of routes used starts to reduce. We imposed
the added restriction that the supply configuration is chosen from the
set \eqref{eq:supply-set} to compare with the equivalent feed-out
problem.

\subsubsection{Equivalence of Feed-Out and Feed-In Supply Optimization}

We use the directed graph in Figure~\ref{fig:23_node} and the
construction in Steps~\ref{C:supply-H}-\ref{C:reverse-demand} to
generate a feed-out problem from the feed-in supply optimization
problem. Using route set $\Rtl$, we generate the optimal profits for
the same instances of total supply as before and compared it with the
maximum profits for the feed-in supply optimization. The absolute
error is in the range of $10^{-4}$ while the maximum relative error is
$5.66\times10^{-6}$, which is within numerical tolerance given the
solver precision is $10^{-8}$ and the number of variables are
12695. This verifies the equivalence of the two problems.

\subsection{Microscopic Implementation}

In real life, the demands and supplies are indivisible entities,
specifically integers. Thus in this section, we implement the
microscopic optimization and compare it with the macroscopic
optimization. Specifically, in this subsection, we assume that for
each node $l$, $\tdl\in \natz$ and $\tSl \in \natz$ are the total
passengers and the number of vehicles, respectively. For simplicity,
we assume all the vehicles have the same capacity of $N$. In the
macroscopic problem, as we assumed that a unit demand can be serviced
by a unit of supply, in this subsection we assume that 1 unit of
demand and supply are $N$ passengers and 1 vehicle, respectively.
With this interpretation, we redefine problem~\eqref{eq:equivoptmodel}
with integer constraints. We let $\tfril$ be the number of people
serviced on $\stup$ and $\tfr$ be the number of vehicles taking route
$r$. We also assume that the operator can choose the supply
configuration as in Section~\ref{sec:max-profits}. To incorporate that
aspect, we assume that the operator has a total supply of $s$ at its
disposal. Then the constraints discussed in~\eqref{eq:constraints}
and~\eqref{eq:supply-opt} are modified to
\begin{equation}\label{eq:discrete-constrs}
  \begin{aligned}
    \tFl \ldef \sum_{(r,i)}\tfril &\leq \tdl, \quad \sum_{l \in
      V_r^i}\tfril \leq N\tfr, \quad \sum_{r\in \Mbf{R} \vert o_r = l
    }\tfr &\leq \tSl, \quad \sum_{l \in V} \tSl \leq s ,
  \end{aligned}
\end{equation}
along with the integer constraints on all optimization variables.

In~\eqref{eq:Nodeprofitability}, we assumed that $\bril$ is the
operator income per unit demand on the service $\stup$. Therefore, the
operator income per an individual passenger on the service $\stup$ is
$\bril/N$. Given this, the operator income from the service $\stup$ is
$\tfril \bril/N$. Further, the travelling cost incurred on route $r$
is $\tfr c_r$. Therefore, the \emph{microscopic operator profit
  maximization problem} is
\begin{align}
  \max_{\tfril, \tfr, \tSl} J &= \sum_{\stup \in \Mbf{W}(\Mbf{R})}
                                \tfril \bril/N -\sum_{r\in
                                \Mbf{R}}\tfr c_r \notag
  \\
  \text{s.t. } &\ \ \eqref{eq:discrete-constrs}, \quad  \tfril,
                 \tfr, \tSl, \tFl \in \natz
                 . \label{eq:discrete-opt}
\end{align} 

The corresponding macroscopic problem is the same
as~\eqref{eq:supply-opt}. We first compare the difference in the
optimal values of~\eqref{eq:discrete-opt} with $\Mbf{R} = \Rin$ and
$\Mbf{R} = \Mc{R}(\Rin)$ with simulations. We call these the
\emph{complete micro problem} and the \emph{reduced micro problem},
respectively. We first show that the reduced route set is also
effective for microscopic solutions. We simulated for five different
vehicle capacities: 2, 4, 5, 6 and 10. Note that with integer
constraints, the problem for the graphs in Figure~\ref{fig:graphs}
becomes rather large. Hence, for these set of simulations, we have
chosen a 12 node directed graph with a time window of $T = 25$ and
$b =2,\ 2.5\ \&\ 3$. For $\Mbf{R} = \Rin$, the number of routes was
$11694$ and the number of variables was 21127. On the other hand, the
problem with $\Mbf{R} = \Mc{R}(\Rin)$ has $75$ routes and $103$
variables for $ b= 2 $, $265$ routes and $510$ variables for
$ b = 2.5 $, and $1116$ routes and $2103$ variables for $b =
3$. Gurobi required $\approx 0.35$ to $0.5$ GB of memory for the
reduced micro problems (for all $b$) while it required $\approx 5$ GB
of memory for the complete micro problems. The 12-node graph data is given in 
the supplementary material. 

From Figure~\ref{fig:comp-times-error}, we observe that the
computation time ratios between the complete micro problem and the
reduced micro problem for different capacities is about $25$ to $ 40$
times with $b = 2$, $ 8$ to $15 $ times with $b = 2.5$ and $2$ to $3$
times for $b = 3$. Also, from Figure~\ref{fig:obj-val-error} we
observe that for most simulations, the optimal values returned by the
complete and reduced micro problems are very similar. Only in about
$4\%$ of the cases the value of the complete micro problem was higher
and even then only by about $10^{-4}$ times relative to the objective
value of the reduced micro problem. Since, Gurobi sometimes doesn't
converge and returns erroneous values due to lower values of MIP Gap
and numerical sensitivity~\cite{gurobi}, the MIP Gap was set to 0.001
to economize on time while ensuring reliability of results.
\begin{figure}
	\begin{subfigure}[t]{0.5\textwidth}
		\centering \includegraphics[width=\textwidth]{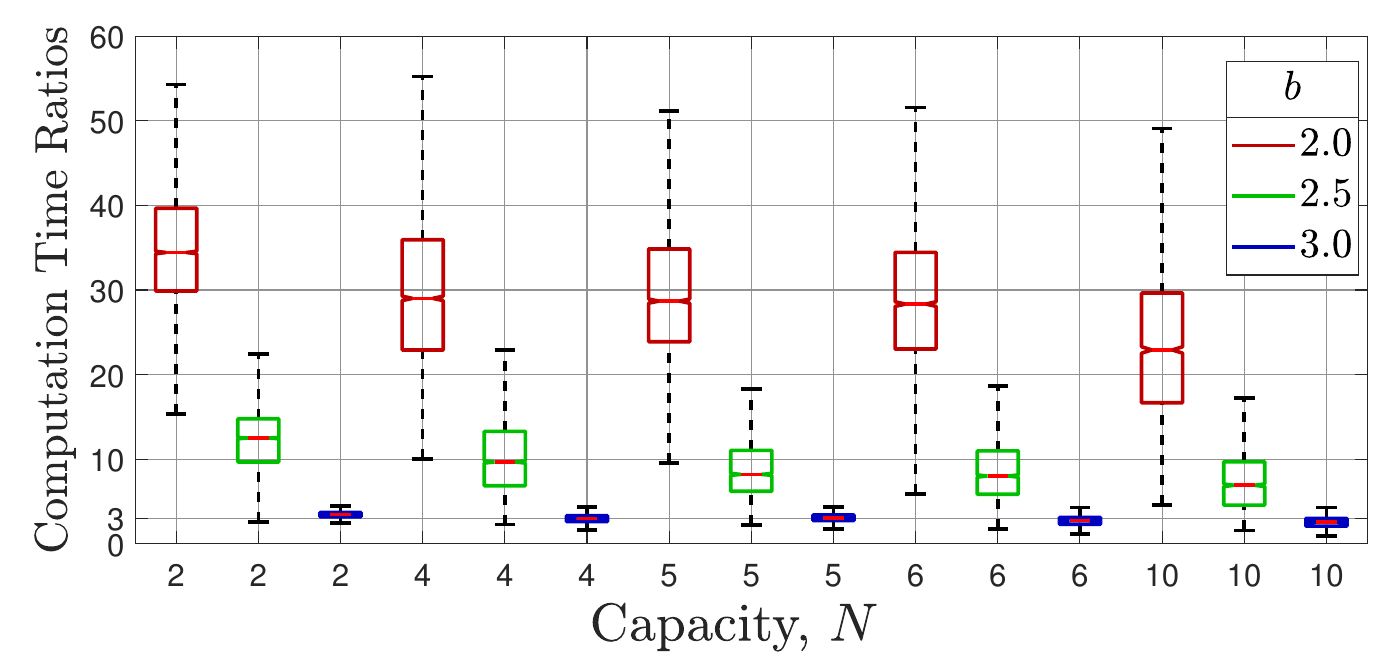}
		\caption{}
		\label{fig:comp-times-error}
	\end{subfigure}
	\begin{subfigure}[t]{0.5\textwidth}
		\centering \includegraphics[width =\textwidth]{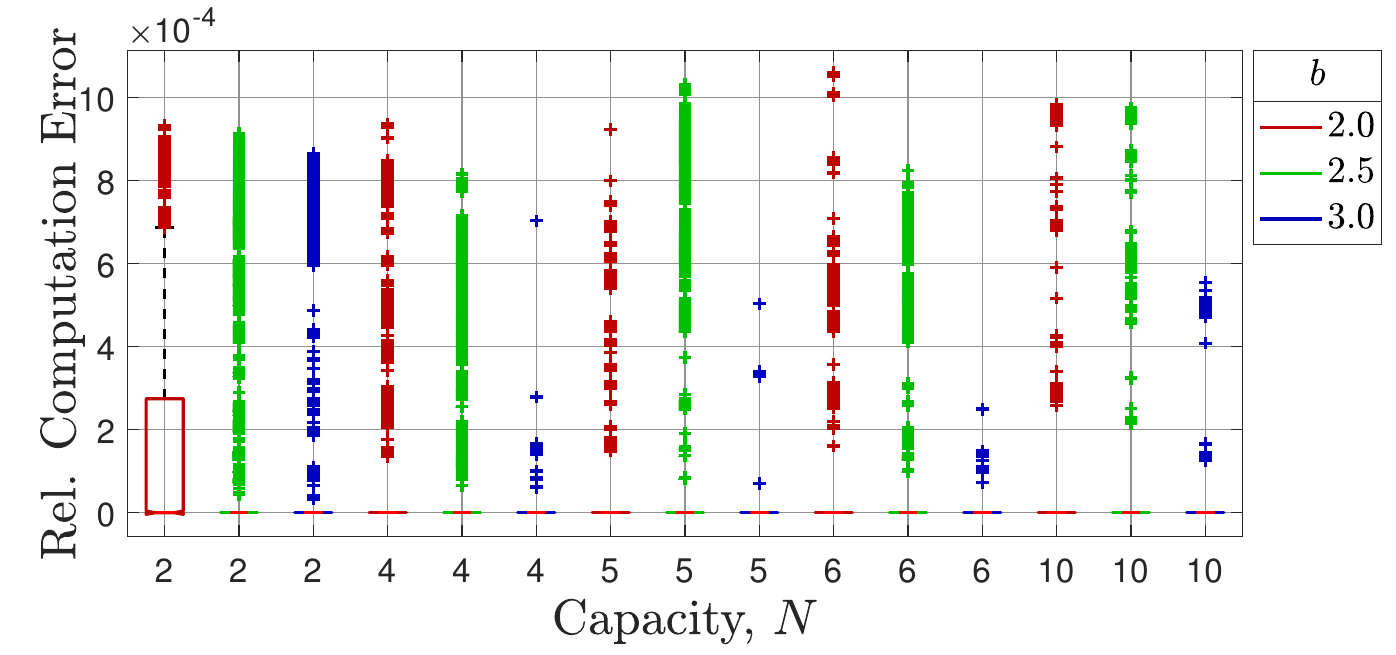}
		\caption{}
		\label{fig:obj-val-error}
	\end{subfigure}
	\caption{\textbf{(a)} Ratio of computation time for the
          complete micro problem to the computation time for the
          reduced micro problem for different $b$'s, and \textbf{(b)}
          Relative error of objective values between optimal solutions
          for the complete micro problem and the reduced micro problem
          on a 12 node graph for different values of $b$ and different
          capacities. The red, green and blue boxes represent
          $b = 2, 2.5, \text{ and } 3$ respectively. }
	\label{fig:Reduced-Micro-vs-micro}
\end{figure}

Next, we compare the macroscopic problem and the complete microscopic
problem for the graphs in Figure~\ref{fig:graphs}. We again set time
windows of $T = 20$ for Graph~I and $T = 30$ for Graph~II and
$b = 2.5$ and $\alpha = 0.5$ for both the graphs. We chose 10
different demand configurations randomly for both the graphs in
Figure~\ref{fig:graphs}.

Since the complete microscopic problem is very demanding in terms of
both the computation time and memory requirements, we do the
comparison in an indirect manner. Notice that the macroscopic problem
is a relaxation of both the complete and reduced micro
problems. Hence, the solutions to the macroscopic problem and the
reduced micro problem provide an upper and lower bound, respectively,
on the optimal objective value for the complete micro problem. Thus,
by comparing these bounds, we can also indirectly compare the
solutions to the macroscopic problem and the complete micro problem.

In general it is expected that with higher values of capacity $N$, the
optimal values of the solutions will be far-off from the solutions
obtained from~\eqref{eq:equivoptmodel} and the supply optimization
problem proposed in Section~\ref{sec:max-profits}. Thus, for a
successful implementation, the relaxation needs to find optimal
solutions which are ``\emph{close}" to the integer optimal.  
 
We first note the actual range of computation times of the reduced
micro problem in Figure~\ref{fig:comp-times}. We observe that the
computation times generally tend to increase with $N$. This is because
more branch and bound iterations are necessary to arrive at the
integer optimal solution.
Next, in Figure~\ref{fig:Rel-errors}, we notice that the relative
difference between the optimal values of reduced micro problem and the
macroscopic problem is quite small, with a median of the order of
$10^{-3}$. This is a reasonably small value considering the number of
variables, the order of the optimal objective value and the
computation power utilized for computing these solutions. Recall that
the solutions to the macroscopic problem and the reduced micro problem
provide upper and lower bounds, respectively, on the optimal objective
value in the complete micro problem. Therefore, the relative
difference between the complete micro problem and the macroscopic
problem is smaller than the errors in
Figure~\ref{fig:Rel-errors}. Thus, the macroscopic problem is a
reasonably good approximation of the reduced micro problem and thereby
the complete micro problem as well.  We also observed that when the
demand is an integer multiple of the supply, there exists an integer
solution. Furthermore, the solution was calculated in times similar to
the macroscopic problem. We suspect that this behaviour is because of
the fact that a unit of supply can pick-up a unit demand at one
location itself and thereby no inter-node mode-sharing is required.

\begin{figure}[htb]
	\centering
	\begin{subfigure}[t]{0.35\textwidth}
		\centering \includegraphics[width =\textwidth]{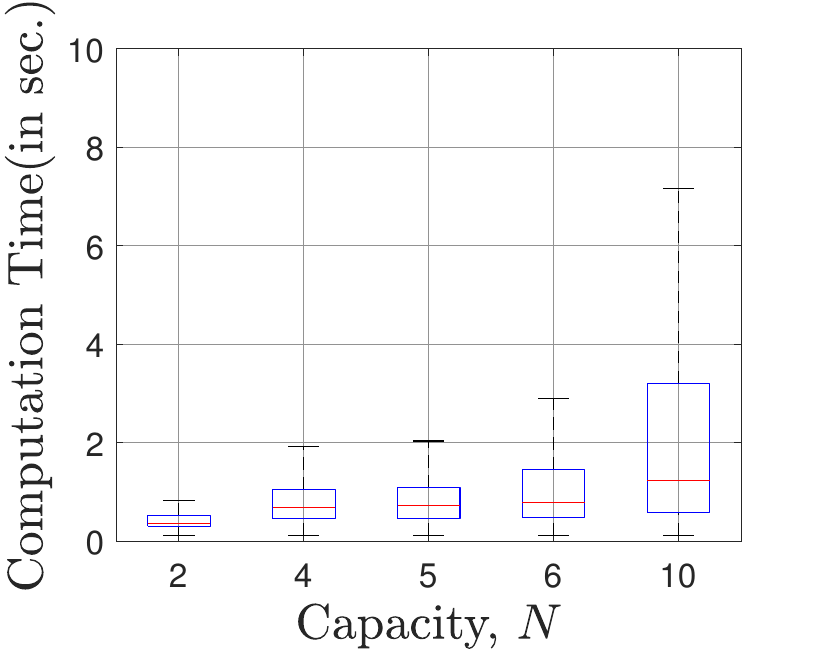}
		\caption{}
		\label{fig:comp-times-grph-1}
	\end{subfigure}%
	\begin{subfigure}[t]{0.35\textwidth}
		\centering \includegraphics[width =\textwidth]{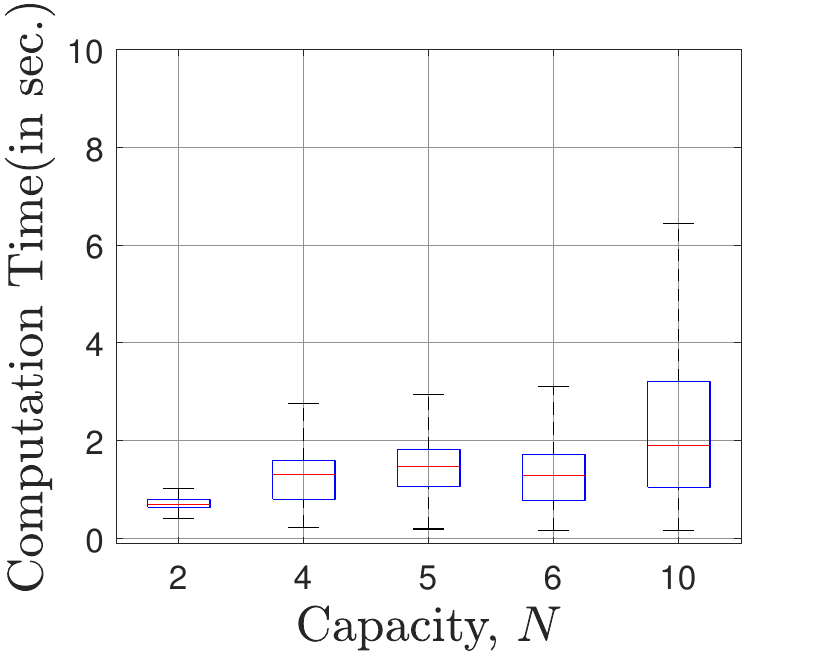}
		\caption{}
		\label{fig:comp-times-grph-2}
	\end{subfigure}
	\caption{Computation times of optimal solutions of reduced
          micro problem for Graph~I and~II in (a) and (b),
          respectively. } \label{fig:comp-times}
\end{figure} 
\begin{figure}[htb]
	\centering
	\begin{subfigure}[t]{0.35\textwidth}
		\centering \includegraphics[width =\textwidth]{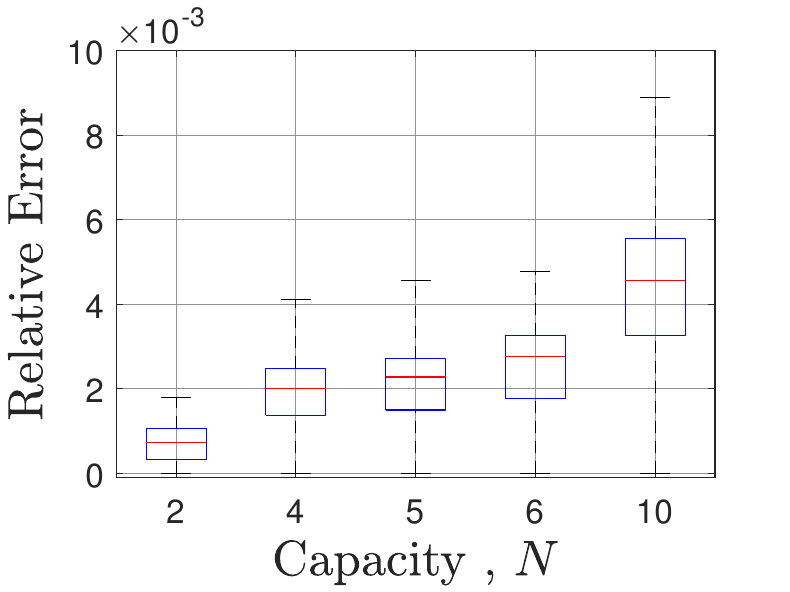}
		\caption{}
		\label{fig:Rel-error-grph-1}
	\end{subfigure}%
	\begin{subfigure}[t]{0.35\textwidth}
		\centering \includegraphics[width =\textwidth ]{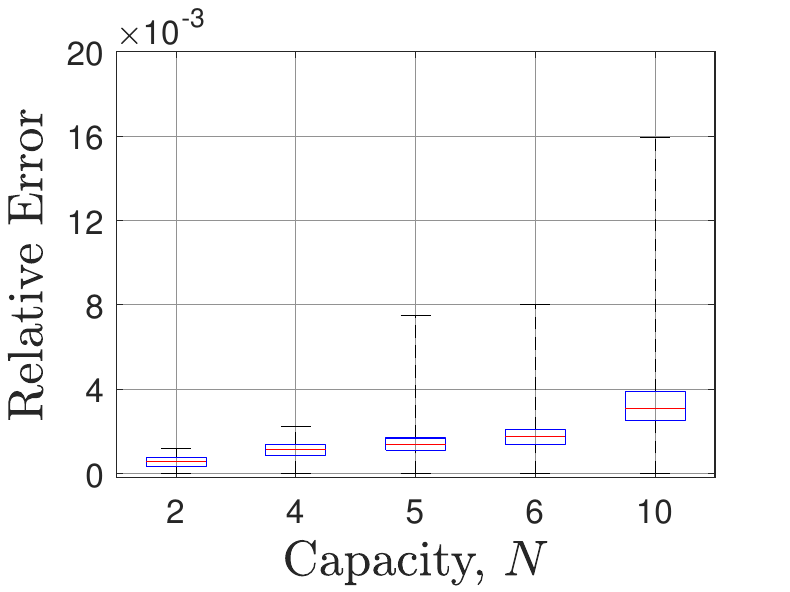}
		\caption{}
		\label{fig:Rel-error-grph-2}
	\end{subfigure}
        \caption{Relative errors between the optimal objective values
          of macroscopic and reduced micro problems for Graphs~I
          and~II in (a) and (b), respectively.}\label{fig:Rel-errors}
\end{figure}

\subsection{Pre-positioning of Relief Material for Disaster %
  Response}

As we discussed in Remark~\ref{rem:op-cost} and in
Section~\ref{sec:pricing-scenarios}, our model can be used even in the
context of routing and planning for disaster response, where the
objective is not to maximize monetary profits but to maximize social
welfare. In this subsection, we consider the problem of planning the
routing of relief supplies post a natural disaster, such as a cyclone.
The states on the eastern coast of India - West Bengal, Odisha and
Andhra Pradesh - face devastating cyclones quite frequently. The scale
of the disaster response is massive and often millions of people are
evacuated~\cite{BBC-Fani-2019, BBC-Amphan-2020,
  WP-Amphan-2020}. Post-cyclone relief efforts are also on an equally
massive scale.
The most vulnerable areas generally tend to be low-lying and near the
coast.

For the illustrative example here, we chose 12 coastal districts in
the state of Odisha that have been severely affected in recent
cyclones. The setup is as follows. The nodes in the graph are the
districts and the links between the nodes are the major highways
between the district headquarters. We chose the hub node as
Bhubaneshwar, the capital city of the state of Odisha. We chose the
link costs and travel times proportional to the distance, which we
created using the \emph{Google My Maps} tool. We present the map and
the constructed network in Figure~\ref{fig:or-map}. Though the links
are marked bi-directional, there are some minor asymmetries. We
present the link data in the supplementary material. We look at the
problem of transporting of relief material from the hub node to the
remaining nodes.

We chose the demand configuration as follows. We assume that the
number of affected people is 2\% of the total population of each
district. Furthermore, we assume that a unit demand refers to relief
material for 400 people. We round off the demands to the nearest
multiple of 5 and we added 10 extra units as a buffer. We assume that
loading of trucks takes 2 hours. We assume that initially, all the
supply of trucks is stationed at the hub node. Thus, for multi-legged
routes we add a wait time of 2 hours per leg. The main goal here is to
maximize the demand met while reducing transportation costs.

With this setup, we present two sets of simulation results. In the
first set of simulations, for each node $l$ we chose the value of
$k_l$ randomly in $\intrangecc{-5}{-1}$. With these settings, we
varied the total supply $s$ and the length of the time window $T$. We
show the results on the demand met in
Figure~\ref{fig:or-dem-met-T}. As can be seen, the obvious trend is
that for a fixed $T$ more demand can be met given greater
supply. Similarly, for each fixed total supply $s$, longer
time-windows mean more demand is met.
 \begin{figure}
 	\centering
 	\includegraphics[width=0.35\textwidth]{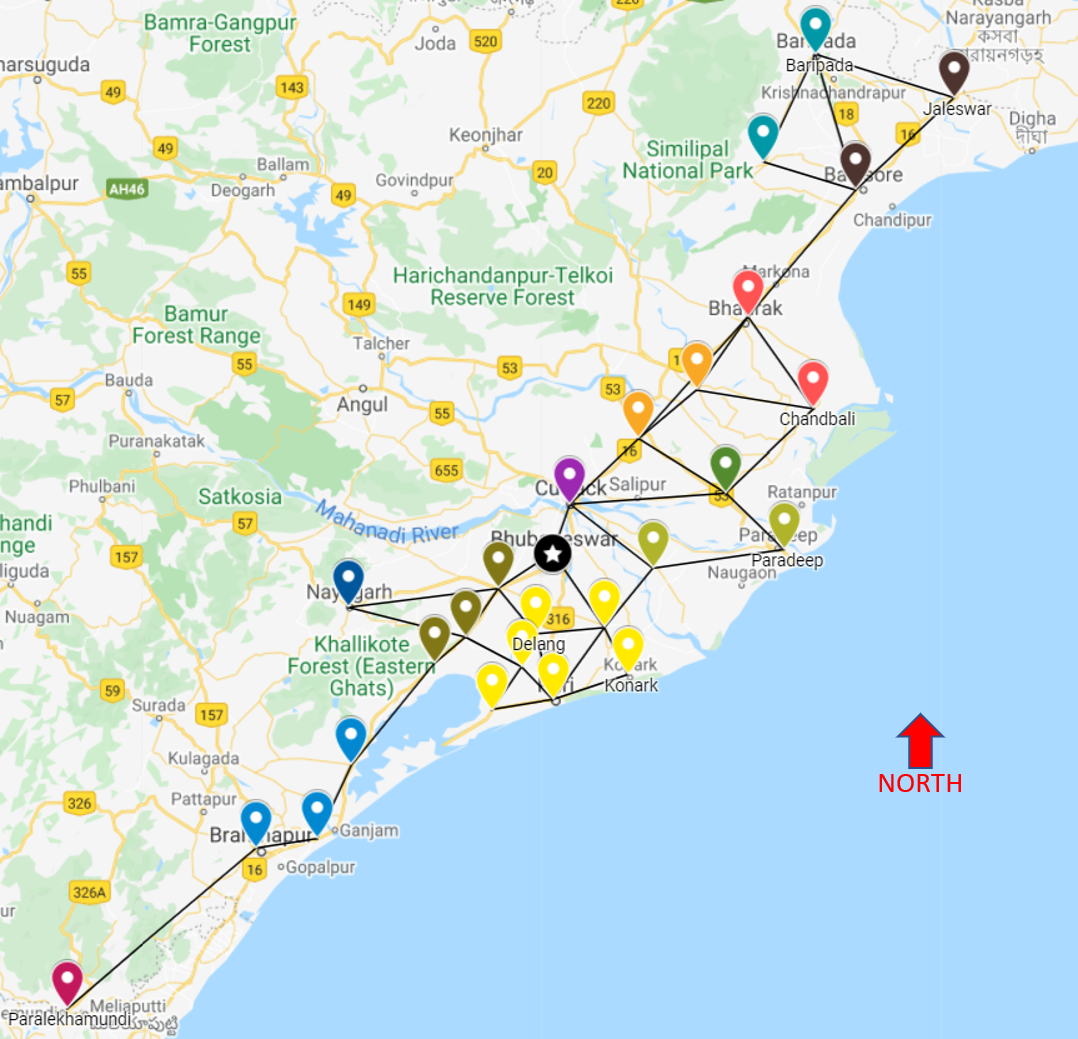}
 	\caption{Map of the graph utilized for Odisha. This map was created with \emph{Google My Maps} tool. Markers of same colour represent places in same district. The Hub (Bhubaneswar) is marked with a black star. }\label{fig:or-map}
 \end{figure}

 In the second set of simulations, we fix the window time $T = 12$ and
 fix the values of $k_l$ to the same value for all nodes
 $l$. Figure~\ref{fig:or-dem-met-kl} shows the simulation results for
 different values of supply $s$ and different values of $k_l$. For
 each fixed supply, as we increase the magnitude of $k_l$ more demand
 is met. Thus, in general, $k_l$ can be made proportional to the
 severity of the cyclone in a given region/node $l$ to divert the
 supplies accordingly.
\begin{figure}[h]
	\begin{subfigure}[b]{0.5\textwidth}
		\centering \includegraphics[width =\textwidth]{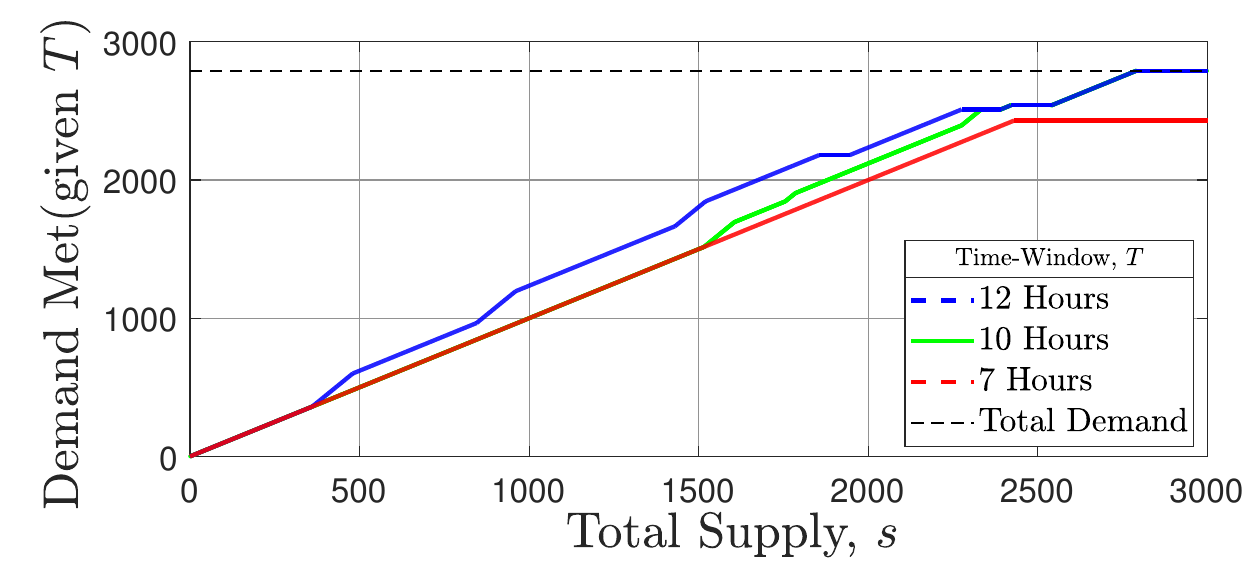}
		\caption{}
		\label{fig:or-dem-met-T}
	\end{subfigure}%
	\begin{subfigure}[b]{0.5\textwidth}
		\centering \includegraphics[width=\textwidth]{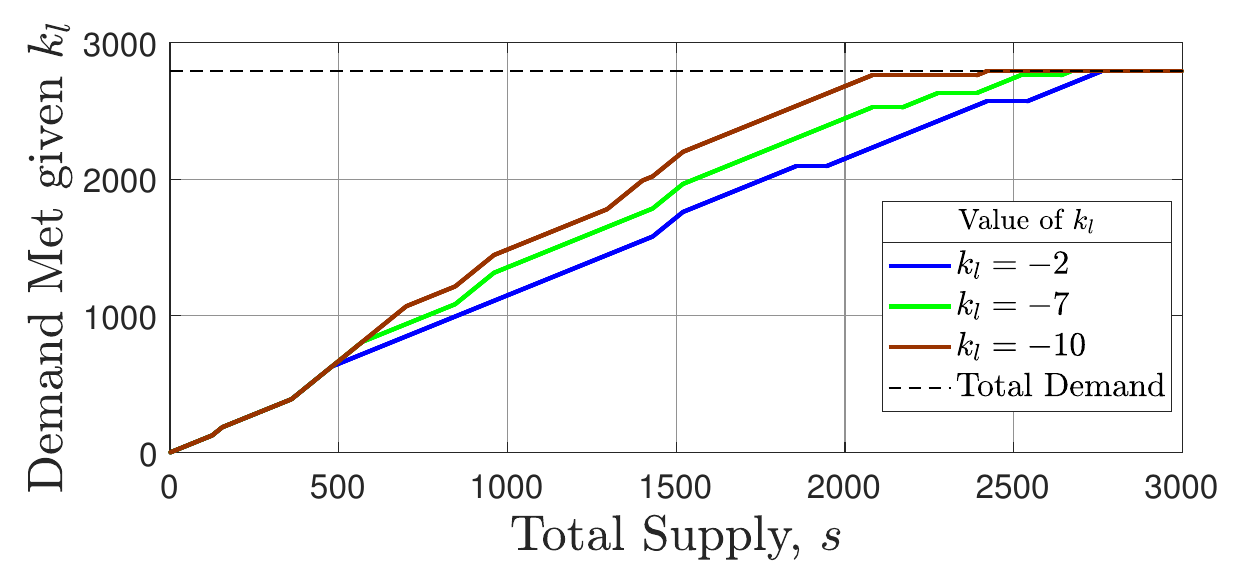}
		\caption{}
		\label{fig:or-dem-met-kl}
	\end{subfigure}%
	\caption{Simulation results for the Odisha supply prepositioning model. \textbf{(a):}
		Demand met given different window times $T$.
		\textbf{(b):} Demand met given window time, $T = 12 $ hours and different values of $k_l$. } \label{fig:Or-SOFO-RES}
\end{figure}


\section{Conclusions}\label{sec:conc}

In this paper, we proposed a problem of \emph{one-shot} coordination
of first mode or last mode transportation service, wherein an operator
seeks to maximize its profits or social welfare with routing and
allocation, for transporting a known demand to or from a common
destination on a network in a given fixed time window. We posed the
problem in a macroscopic setting where we considered all supplies and
demands as volumes. Using K.K.T. analysis we were able to design an
offline (supply and demand independent) method that reduces the
complexity of the online (after supply and demand are revealed)
optimization. Then, we considered the feed-in supply optimization
problem, analyzed its properties and computed the absolute maximum
profits that the operator can earn over all possible supply
configurations for a given demand configuration. We showed an
equivalence between the feed-in supply optimization problem and the
\emph{one-shot feed-out} problem, wherein the operator needs to
drop-off people to their destinations from a common origin within a
fixed time window. This allows us to directly apply the results and
algorithms developed for the feed-in problem. We presented a pricing
model and derived necessary conditions for the feeder service to be
viable. Finally, we presented several simulation results to illustrate
our analytical results. Through simulations, we also demonstrated that
the route reduction algorithm that we proposed for the macroscopic
formulation is still useful for efficiently computing the solutions to
the microscopic formulation, in which the decision variables are all
integers. We also explored a realistic scenario of disaster response.

Future work includes improvements in the route reduction algorithm,
perhaps making it also dependent on the demand and supply
configurations, a more rigorous study of the macroscopic formulation
and the route reduction algorithm as a computationally efficient
solution method for the microscopic formulation; extension to a
multiple time window problem, load balancing of the supply in
accordance with the anticipated demand using the insights from the
feed-in supply optimization problem, extension to the scenario with
uncertainty about supply and demand and finally an integrated
coordination of multiple modes of transportation.

\section{Acknowledgements}
We wish to thank Dr. Tarun Rambha of Department of Civil Engineering
at the Indian Institute of Science (IISc), Bengaluru for his valuable
comments and suggestions.

\bibliographystyle{IEEEtran} \bibliography{alias,main,pavan}

\appendix

%
%
%
%

\subsection{Proofs of Results on General One-Shot Feeding Problem}

%
%
%
  
%

\subsubsection{Proof of Proposition \ref{prop:necc-cond-opt-sol}}
\label{prf:necc-cond-opt-sol}

We first introduce the Lagrangian~$\Mc{L}$ for the
problem~\eqref{eq:equivoptmodel},
\begin{align}
  &\Mc{L}= J + \sum_{l \in V} \mu_l( \sum_{r,i} \fril-d_l ) +
    \sum_{r,i} \lambda_r^i(\sum_{l \in V_r^i} \fril - f_r) 
    +\sum_{l \in V} \gamma_l (\sum_{r \vert o_r = l}f_r - S_l) -
    \sum_{r}\delta_r f_r- \sum_{r,l,i} \delta_r^i(l) \fril  ,
\label{eq:lagrangian}
\end{align}
where $ \mu_l, \lambda_r^i, \gamma_l, \delta_r, \delta_r^i(l) \leq 0$
are the KKT multipliers. The stationarity and complementary slackness
conditions are
\begin{subequations}
  \begin{align}
    &\frac{\partial\Mc{L}}{\partial \fril} %
      = \bril + \mu_l + \lambda_r^i-\delta_r^i(l) = 0
      \label{eq:S1}
    \\
    &\frac{\partial\Mc{L}}{\partial f_r} %
      = -c_r -\sum_{i}\lambda_r^i+ \gamma_{o_r} -\delta_r = 0
      \label{eq:S2}
    \\
    &\mu_l( \sum_{r,i} \fril-d_l )= 0, \quad \lambda_r^i(\sum_{l \in
      V_r^i} \fril - f_r) = 0 \label{eq:C1}
    \\
    &\gamma_l (\sum_{r \vert o_r = l}f_r - S_l) = 0, \quad
      \delta_r^i(l) \fril = 0 , \quad \delta_r f_r = 0 . \label{eq:C2}
  \end{align}
\end{subequations}

\textbf{(a):} In an optimal solution, if $f_r^*>0$ then
$\delta_r^*= 0$. Further as $\gamma_{o_r}^*\leq 0$ and $c_r>0$, we can
use~\eqref{eq:S2} to obtain
\begin{align}\label{eq:lam-cond}
  \sum_{i= 1}^{\theta_r} \lstar = -c_r+\gamma_{o_r}^*< 0 .
\end{align}
Thus, we see from~\eqref{eq:C1} that
$\displaystyle \sum_{l \in V_r^i} \fril^* = f_r^* $ for at least one
leg $i$ in the route $r$. Hence we must have $\fril^* >0$ for some
$i \in \intrangecc{1}{\theta_r}$ and $l \in V_r^i$.  Now, if
$\fril^*>0$ then $\delta_r^i(l) = 0$. Hence condition~\eqref{eq:S1}
implies
\begin{equation}\label{eq:beta-cond}
  \bril=  -\lstar -\mu_l^*\geq -\lstar \geq 0, \text{ if } \fril^*>0
  .
\end{equation}

\textbf{(b):} We prove this by contradiction. Let us assume that there
exists some optimal solution such that~\eqref{eq:alloc-prop} does not
hold for a circular leg $i$ of some $r \in \mathbf{R}$. Then,
$\delta \ldef f_r^* - \sum \fril^*>0$ units of flow costs
$\delta c_r^i$ while earning no revenue.
	
Now consider a route $\bar{r}$ that follows the same sequence of nodes
as $r$ but without the $i^{\thh}$ leg of $r$. We can construct another
feasible solution with $f_r = f_r^*-\delta$ and
$f_{\bar{r}} = f_{\bar{r}}^* + \delta$ with all the other flows and
allocations remaining the same. Now leg $i$ of route $r$ in this
solution satisfies~\eqref{eq:alloc-prop}. Such a solution is feasible
and does not lose $\delta c_r^i$ while earning same revenue. Thus,
this solution earns more profit than the optimal solution, which is a
contradiction.

\textbf{(c):} Notice that $\fril^* \bril\geq - \fril^* \lstar$ for
each $\stup$, since if $\fril = 0$ then the inequality holds trivially
and if $\fril > 0$ then the condition~\eqref{eq:beta-cond} holds,
which implies
\begin{align*}
  \sum_{i = 1}^{\theta_r} \sum_{l\in V_r^i} \fril^* \bril %
  &\geq -\sum_{i = 1}^{\theta_r} \sum_{l\in V_r^i} \fril^* \lstar .
\end{align*}
Now, using part \textbf{(b)}, we obtain
\begin{align*}
  \sum_{i = 1}^{\theta_r} \sum_{l\in V_r^i} \fril^* \bril \geq -
  \sum_{l\in V_r^1} f_r^1(l)^* {\lambda_r^1}^* - f_r^* \sum_{i = 2}^{\theta_r}
  \lstar .
\end{align*}
From the K.K.T. conditions, we either have ${\lambda_r^1}^* = 0$ or
$\sum_{l\in V_r^1} f_r^1(l)^* = f_r^*$. In either case, we have
\begin{align*}
  \sum_{i = 1}^{\theta_r} \sum_{l\in V_r^i} \fril^* \bril \geq  - f_r^*
  \sum_{i = 1}^{\theta_r} \lstar  \geq f_r^* c_r ,
\end{align*}
where the second inequality follows from~\eqref{eq:lam-cond}.

\textbf{(d):} Now due to Part \textbf{(c)}, if $f_r^* > 0$ then
$\forall$ $i > 1$, $\exists$ $l \in V_r^i$ such that $\fril^*>0$. In
any feasible solution, for $i>1$, if $\bril < c_r^i$ and $\fril> 0$
then we can construct another feasible solution in which $\fril = 0$
keeping all other node allocation variables unchanged. The value of
the objective is \emph{strictly} more with such a new solution. Thus,
$\bril \geq c_r^i$ if $\fril^*>0$ for $i > 1$ in an optimal solution
of Problem \eqref{eq:equivoptmodel}. If $i=1$ and leg $1$ of route $r$
is circular then according to a similar argument as above, we must
again have at least one $l \in V_r^1$ such that $f_r^1(l)^* > 0$ and
for each such $l$, $\beta_r^1(l) \geq c_r^1$.

\textbf{(e):} In part~\textbf{(c)}, by setting $\theta_r = 1$, we get
\begin{align*}
  \sum_{l\in V_r^1} f_r^1(l)^* \beta_r^1(l) \geq f_r^*c_r .
\end{align*}
Similarly, setting $\theta_r = 1$ in~\eqref{eq:lam-cond} gives us
$\lambda_r^1 \leq -c_r < 0$, which along with~\eqref{eq:C1} means that
in any optimal solution, constraint~\eqref{eq:legconstr} is a strict
equality for single legged routes. Therefore, we have
$\sum_{l\in V_r^1} (\beta_r^1(l)-c_r)f_r^1(l)^* \geq 0 $. Now,
part~\textbf{(e)} follows in the same way as part~\textbf{(d)}. \qed

\subsection{Proofs of Results on Best Transportation %
  Parameters}

\subsubsection{Proof of Lemma~\ref{lem:viability}}
\label{prf:viability}

Consider an arbitrary node $l \in V$ and a service tuple $\stup$ such
that $l \in V_r^i$. Then, from~\eqref{eq:opt-pric}
and~\eqref{eq:Nodeprofitability} notice that
\begin{align*}
  \bril - c_r^i &= g_l(b) - \alpha\tril - c_r^i - k_l \leq g_l(b) -
                  g_l(1) - k_l
\end{align*}
where $g_l(b)$ is as defined in~\eqref{eq:alt-perceived-cost}. The
inequality follows from the fact that a leg is also single legged
route and therefore $g_l(1)$ lower bounds $\alpha\tril - c_r^i$. Thus
it is easy to see that $\bril - c_r^i \geq 0$ and hence
$\Mc{R}(\Mbf{R}) \neq \emptyset$ only if $g_l(b) \geq g_l(1) + k_l$.
Now, we show that if $\exists l \in V$ such that
$g_l(b) \geq g_l(1) + k_l$, then $\Mc{R}(\Mbf{R}) \neq \emptyset $. We
show that $r(l,1)^*\in\Mc{R}(\Mbf{R})$ for this case. Thus,
considering $q = r(l, 1)^*$, as per Table~\ref{table:1}, we have
\begin{align*}
  \beta_q^1(l) -c_q^1 = g_l(b) - \alpha t_q^1(l) -k_l -c_q^1 = g_l(b)
  - (\alpha t_q + c_q + k_l) = g_l(b) - (g_l(1) + k_l) \geq 0 .
\end{align*}
Thus, $q \in \Mbf{R}_1$ which implies
$\Mc{R}(\Mbf{R}) \neq \emptyset$, if $g_l(b)\geq g_l(1)+k_l$.

Now, we know from Remark~\ref{rem:alt-transport-b} that $g_l(b)$ is an
increasing function of $b$. Thus, for $k_l \geq 0 $, $b\geq 1$ is
necessary to ensure $\Mc{R}(\Mbf{R}) \neq \emptyset$. Again, for
$k_l<0$ for some $l \neq H$, and $b= 1$, $g_l(b)-g_l(1)-k_l \geq 0$,
following the discussion above.  \qed

\subsubsection{Proof of Proposition \ref{prop:multi-legged}}
\label{prf:multi-legged}

Consider a route $r \in \Mc{R}(\Mbf{R})$ with two distinct legs with
feasible services.  Note that such a route, has atleast one leg, $i$
with $o_r^i = D_r^i = H$. Also, route $r \in \mathbf{R}_2$
(see~\eqref{eq:R2}) as $\theta_r\geq 2$. Now, let $\bar{r}$ be the
route that is same as leg $i$ of route $r$. Thus,
$o_{\bar{r}} = D_{\bar{r}} = H$, $\theta_{\bar{r}} = 1$ and
$w_{\bar{r}}^1\geq 0$, which implies $\bar{r} \in \Mc{R}(\Mbf{R})$ as
$\bar{r} \in \mathbf{R}_1$ (see~\eqref{eq:R1}).
Hence, from~\eqref{eq:opt-pric} and~\eqref{eq:Nodeprofitability},
$\displaystyle w_{\bar{r}}^1 = \max_{l\in
  V_{\bar{r}}^1}\{\beta_{\bar{r}}^1(l)-c_{\bar{r}} \} \geq0$ implies
\begin{equation}
  g_l(b) \geq \alpha t_{\bar{r}}^1(l)+c_{\bar{r}} +k_l, \text{ for
    some } l \in V_{\bar{r}}, \ l \neq H, \label{eq:glb-lower-bnd}
\end{equation}
where we have used the fact that $g_H(b) = 0$ for all $b \geq 0$ and
$k_H > 0$, which imply that $l \neq H$. Now, note that for
\begin{align*}
  \alpha t_{\bar{r}}^1(l)+c_{\bar{r}}%
  &=\alpha t_{\bar{r}}^1(l) + c_{\bar{r}}(H,l) + c_{\bar{r}}(l,H)\\
  &\geq%
    \begin{cases}
      g_l(1) + c^*(H,l) , &\text{ if }\Mbf{R} = \Rin \\
      g_l(1) + c^*(l,H), &\text{ if }\Mbf{R} = \Rout
    \end{cases}\\
  &= g_l(1) + c^*(D,O)
\end{align*}
where we have split the route cost $c_{\bar{r}}$ into
$c_{\bar{r}}(H, l)$, the cost to go from $H$ to $l$ on route
$\bar{r}$, and $c_{\bar{r}}(l, H)$, the cost to go back from $l$ to
$H$.
Therefore, from~\eqref{eq:glb-lower-bnd},
$g_l(b)\geq g_l(1)+ c^*(D,O)+ k_l$, thus proving claim~(a).

Now using~\eqref{eq:alt-perceived-cost}, we get
$g_l(b) =\alpha t_{r(l,b)^*} + b c_{r(l,b)^*}$ and
$g_l(1) = \alpha t_{r(l,1)^*}+ c_{r(l,1)^*}$. Therefore, claim~(b) is
true.

Next, from Lemma~\ref{lem:viability}, we know that if $k_l > 0$ for
all $l \in V$ then $\Mc{R}(\Mbf{R})$ is non-empty only if $b >
1$. Then, notice from~\eqref{eq:best-alt-route}
and~\eqref{eq:alt-perceived-cost} that
$c_{r(l,0)^*}\geq c_{r(l,1)^*}\geq c_{r(l,b)^*}$ and
$t_{r(l,b)^*}\leq t_{r(l,\infty)^*}$. Using these facts we obtain
claim~(c). Finally, as $c_{r(l,b)^*}$ is a non-increasing function of
$b$ we conclude that $b_l^* \geq \hat{b}_l^*$.  \qed

\subsection{Proofs of Results on Feed-in Supply Optimization}

\subsubsection{Proof of Proposition \ref{prop:supply-opt}}
\label{prf:supply-opt}

Part \textbf{(a)} is true because Theorem~\ref{thm:algo-1} applies for
every possible supply configuration. We prove the remaining parts by
contradiction.

%

%
%
\textbf{(b)}: Let us assume there exists an optimal solution, which we
denote using a superscript $*$, in which $f_{r}^* > 0$ for a route
$r \in \MbrR$ such that $f_{r}^1(o_r)^* < f_{r}^*$. Then, there are
the following two cases. \textbf{(i)} $f_{r}^1(o_r)^* > 0$ and
$f_r^i(l)^* = 0$ for all other legs and nodes $(i,l)$ on route $r$;
and \textbf{(ii)} there is a pair
$(i_1, l_1) \in \intrangecc{1}{\theta_r} \times V_r$ with
$(i_1, l_1) \neq (1, o_r)$ and $f_r^{i_1}(l_1)^* > 0$. Note that
Proposition~\ref{lem:legs} implies that scenario (i) may occur only if
the route is simple ($\theta_r = 1$).

\textbf{(i)} In this case, clearly the original solution cannot be
optimal because $f_{r}^* - f_{r}^1(o_r)^*$ volume of vehicles simply
traverse the route without serving any demand, thus incurring a
non-zero cost while earning nothing.

\textbf{(ii)} Without loss of generality let $(i, l_1)$ be the leg,
node pair other than $(1, o_r)$ that first appears in the sequence
given by the route $r$ such that $f_r^{i}(l_1)^* > 0$. Then consider
the route $\bar{r}$, which is the sub-route of $r$ formed by excluding
all nodes in $r$ that occur prior to $(i, l_1)$ so that
$o_{\bar{r}} = l_1$. Thus, for each leg and node pair $(j,m)$ of the
route $\bar{r}$ there is a unique leg and node pair $(k,l)$ of route
$r$ such that $(j - i +1, m) = (k,l)$ and moreover they appear in the
same order. Now consider a solution $f_r = f_r^1(o_r) = f_r^1(o_r)^*$
and $f_{\bar{r}} = f_{\bar{r}}^* + (f_r^* - f_r^1(o_r)^*)$ and such
that $f_{\bar{r}}^j(m) + f_r^k(l) = f_{\bar{r}}^j(m)^* + f_r^k(l)^*$
for every $m \in V_{\bar{r}}^i$ and for each leg $j$ of route
$\bar{r}$. This solution is feasible and earns higher profits than the
original solution as the flow $(f_r^* - f_r^1(o_r)^*)$ does not have
to traverse the sequence of nodes from $(1, o_r)$ to $(i, l_1)$ and
the node allocations are unchanged. This again contradicts the
assumption that the original solution is optimal.  Thus, for each
$r \in \MbrR$, $f_r^1(o_r)^* = f_r^*$. As a consequence of this fact
and Proposition~\ref{lem:legs}, we also
have~\eqref{eq:no-redund-flow}. Also, $d_H= 0$ implies $f_r^* = 0$,
$\forall r $ s.t. $o_r = H$.

\textbf{(c)}: Assume an optimal solution with $f_r^*>0$, for a
$r\in \MbrR$ with $\beta_r^1(o_r)< c_r^1$. Using \textbf{(b)}, we know
$f_r^1(o_r)^* = f_r^*>0$. However, one could set $f_r^1(o_r)^* = 0$
and move the supply on $o_r$ as in part \textbf{(b)}. Then, the so
constructed solution would again earn higher profits, which
contradicts the assumption that the original solution is optimal.
Consequently, with Proposition \ref{itm:necc-c},
$\exists\ l \in V_r^i,\ s.t. \ \fril^*>0, \ \bril\geq c_r^i, \forall i
\in \intrangecc{1}{\theta_r}$ .

\textbf{(d)}:
%
Suppose that a node $l$ violates the property in~ \ref{A:4}
%
%
and yet $\fril^* > 0$ for some route $r$ and a leg $i$. Then by part
\textbf{(c)} we must have $\beta_r^i(l) - c_r^i\geq 0$. Now, given the
service tuple $\stup$, consider a simple route $q$, which is the
sub-route of route $r$ from the last visit to node $l$ in leg $i$ to
$H$ in that leg. As a result, $o_q = l$ and $q \in \Rin$. Now, notice
that
\begin{align*}
\beta_q^1(l) - c_q^1 \geq \beta_r^i(l) - c_r^i \geq 0 ,
\end{align*}
since $q$ is a sub-route of the leg $i$ of route $r$. This contradicts
the assumption that $l$ violates the property in~\ref{A:4}.

%
%

\textbf{(e)}: One can construct another route $q$ from $r$ by avoiding
the cycle. Again, moving the supply to this route (in a manner similar
to the previous parts) earns more profits.


\textbf{(f)}: In the scenario $s\leq \sum_l d_l$, the key observation
is that the full demand cannot be served by simple routes. Thus, if
$\displaystyle \sum_{r \vert o_r = l} f_r^* < S_l$ for some node
$l \in V$ then there is some unused supply. Such redundant supply
could potentially be used to serve more demand either at node $l$ or
moved to a different node $\bar{l}$ to meet the demand there with
simple routes. Assumption~\ref{A:4} implies that there exist simple
routes to which if the redundant supply is reallocated then the
profits are higher. This contradicts the assumption that the original
solution is optimal. Thus, in every optimal solution, we have
\begin{equation*}
S_l = \displaystyle \sum_{r \vert o_r = l} f_r^* \leq F_l^* \leq
d_l, \quad \forall l \in V ,
\end{equation*}
where the second inequality is just one of the constraints in the
optimization problem. \qed

\subsubsection{Proof of Lemma \ref{lem:Rl-least-perceived-cost}}
\label{prf:Rl-least-perceived-cost}

By the definition of $\Mc{S}(l)$ in~\eqref{eq:best-routes},
$\beta_r^1(l)-c_r > \beta_q^1(l)-c_{q}$, $\forall \ r \in \Mc{S}(l)$
and $\forall q \in ( \srs{l} \setminus \Mc{S}(l) )$. Every other route
$\bar{r} $ either originates from a different location or has multiple
legs and in each case the leg/route cost is higher and the operator
incomes are lower. Therefore,
$\exists q \in (\srs{l} \setminus \Mc{S}(l) )$ for which
$ \beta_q^1(l)-c_{q}> \beta_{\bar{r}}^i(l)-c_{\bar{r}}^i$. Hence,
$\beta_r^1(l)-c_r > \beta_{\bar{r}}^i(l)-c_{\bar{r}}^i$ for all
$r \in \Mc{S}(l)$ and $\bar{r} \notin \Mc{S}(l)$. Further, for
$\forall r \in \Mc{S}(l)$ and
$\forall q \in ( \srs{l} \setminus \Mc{S}(l) )$, the perceived costs
satisfy
\begin{align*}
  &( \alpha t_r + c_r ) - ( \alpha t_q + c_q ) = ( \alpha t_r + c_r -
    g_l + 1) - ( \alpha t_q + c_q  - g_l + 1)  =- (\beta_r^1(l) - c_r)
    + (\beta_q^1(l) - c_q) < 0 , \numberthis \label{eq:best-profits}
\end{align*}
where we have used~\eqref{eq:opt-pric}
and~\eqref{eq:Nodeprofitability} and the fact that $o_z = l$ for all
$z \in \srs{l}$, which implies that the route traversal time
$t_z = t_z^1(l)$. This proves the result. \qed


\subsubsection{Proof of Lemma~\ref{lem:equal_prices}}
\label{prf:equal_prices}

As for both the problems, the time window, best travel cost and the
best travel time are same, it therefore suffices to show that
$t_{\bar{r}}^{\theta_r-i+1}(l) = \htril$. Note that the $i^\thh $ leg
of $r$ is same as the $(\theta_r-i+1)^\thh$ leg of $\bar{r}$ in
reverse. Also, $T-t_{\bar{r}}^{\theta_r-i+1}(l)$ is the last possible
pick-up time along $(\bar{r}, \theta_r-i+1, l)$, based on the
observations of \eqref{eq:opt-pric} which implies that
$T-t_{\bar{r}}^{\theta_r-i+1}(l)$ is the first possible drop-off time
for the reverse route $r$ along the service tuple $(r,i,l)$. \qed

\end{document}